\documentclass[sigconf]{acmart}

%

\copyrightyear{2024}
\acmYear{2024}
\setcopyright{rightsretained}
\acmConference[KDD '24]{Proceedings of the 30th ACM SIGKDD Conference on
Knowledge Discovery and Data Mining}{August 25--29, 2024}{Barcelona, Spain}
\acmBooktitle{Proceedings of the 30th ACM SIGKDD Conference on Knowledge
Discovery and Data Mining (KDD '24), August 25--29, 2024, Barcelona,
Spain}
\acmDOI{10.1145/3637528.3671757}
\acmISBN{979-8-4007-0490-1/24/08}


\makeatletter
\gdef\@copyrightpermission{
  \begin{minipage}{0.3\columnwidth}
   \href{https://creativecommons.org/licenses/by/4.0/}{\includegraphics[width=0.90\textwidth]{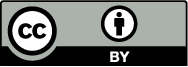}}
  \end{minipage}\hfill
  \begin{minipage}{0.7\columnwidth}
   \href{https://creativecommons.org/licenses/by/4.0/}{This work is licensed under a Creative Commons Attribution International 4.0 License.}
  \end{minipage}
  \vspace{5pt}
}
\makeatother

\settopmatter{printacmref=true}



\usepackage{hyperref}
\usepackage{amsmath}
\usepackage{soul}
\usepackage{graphicx}
\usepackage{algorithm}
\usepackage[noend]{algpseudocode}
\usepackage{natbib}
\usepackage{xspace}
\usepackage{subfig}
\usepackage{bbm}
\usepackage{bm}
\usepackage{centernot}
\usepackage{tcolorbox}
\usepackage{balance}

\usepackage{tikz}
\usepackage{pgfplots}

\definecolor{yafcolor1}{rgb}{0.4, 0.165, 0.553}
\definecolor{yafcolor2}{rgb}{0.949, 0.482, 0.216}
\definecolor{yafcolor3}{rgb}{0.47, 0.549, 0.306}
\definecolor{yafcolor4}{rgb}{0.925, 0.165, 0.224}
\definecolor{yafcolor5}{rgb}{0.141, 0.345, 0.643}
\definecolor{yafcolor6}{rgb}{0.965, 0.933, 0.267}
\definecolor{yafcolor7}{rgb}{0.627, 0.118, 0.165}
\definecolor{yafcolor8}{rgb}{0.878, 0.475, 0.686}

\newcommand{\dmax}{d_{\text{max}}}
\newcommand{\dmin}{d_{\text{min}}}
\newcommand{\TC}{\text{TC}}

\usepackage{algorithm}
\usepackage[noend]{algpseudocode}

\newcommand{\bigO}{\ensuremath{\mathcal{O}}\xspace}
\newcommand{\NP}{\ensuremath{\mathbf{NP}}\xspace}

\algrenewcommand\algorithmicrequire{\textbf{Input:}}
\algrenewcommand\algorithmicensure{\textbf{Output:}}

\usepackage{array}
\usepackage{arydshln}
\setlength\dashlinedash{0.2pt}
\setlength\dashlinegap{1.5pt}
\setlength\arrayrulewidth{0.3pt}

\newcommand{\ptitle}[1]{\smallskip\noindent{\bf #1.}}
\newcommand{\pttitle}[1]{\smallskip\noindent{\it #1.}}

\newcommand{\set}[1]{\left\{#1\right\}}

\newcommand{\fpr}[1]{\mathopen{}\left(#1\right)}

\newcommand{\abs}[1]{{\left|#1\right|}}
\newcommand{\floor}[1]{{\left\lfloor#1\right\rfloor}}

\newcommand{\define}{\leftarrow}

\DeclareRobustCommand{\dispfunc}[2]{%
    \ensuremath{%
        \ifthenelse{\equal{#2}{}}%
            {\mathit{#1}}%
            {\mathit{#1}\fpr{#2}}}}

\newcommand{\diver}[1]{\dispfunc{div}{#1}}

\newcommand{\algclust}[1]{\dispfunc{\textsc{Clust}}{#1}}
\newcommand{\algextract}[1]{\dispfunc{\textsc{Extract}}{#1}}
\newcommand{\algbac}[1]{\dispfunc{\textsc{BAC}}{#1}}
\newcommand{\algbacb}[1]{\dispfunc{\textsc{BCR}}{#1}}
\newcommand{\algbacf}[1]{\dispfunc{\textsc{BCF}}{#1}}

\newtheorem{theorem}{Theorem}
\newtheorem{lemma}{Lemma}
\newtheorem{corollary}{Corollary}

\newtheorem{prop}{Proposition}
\newtheorem{prob}{Problem}


\pgfdeclarelayer{background}
\pgfdeclarelayer{foreground}
\pgfsetlayers{background,main,foreground}

\definecolor{yafaxiscolor}{rgb}{0.3, 0.3, 0.3}

\newlength{\yafaxispad}
\setlength{\yafaxispad}{-2pt}
\newlength{\yaftlpad}
\setlength{\yaftlpad}{\yafaxispad}
\addtolength{\yaftlpad}{-0pt}
\newlength{\yaflabelpad}
\setlength{\yaflabelpad}{-2pt}
\newlength{\yafaxiswidth}
\setlength{\yafaxiswidth}{1.2pt}
\newlength{\yafticklen}
\setlength{\yafticklen}{2pt}

\makeatletter
\def\pgfplots@drawtickgridlines@INSTALLCLIP@onorientedsurf#1{}
\makeatother

\newcommand{\yafdrawaxis}[4]{
	\pgfplotstransformcoordinatex{#1}\let\xmincoord=\pgfmathresult 
	\pgfplotstransformcoordinatex{#2}\let\xmaxcoord=\pgfmathresult 
	\pgfplotstransformcoordinatey{#3}\let\ymincoord=\pgfmathresult 
	\pgfplotstransformcoordinatey{#4}\let\ymaxcoord=\pgfmathresult 
	\pgfsetlinewidth{\yafaxiswidth} 
	\pgfsetcolor{yafaxiscolor}
	\pgfpathmoveto{\pgfpointadd{\pgfpointadd{\pgfplotspointrelaxisxy{0}{0}}{\pgfqpointxy{\xmincoord}{0}}}{\pgfqpoint{-0.5\yafaxiswidth}{\yafaxispad}}}
	\pgfpathlineto{\pgfpointadd{\pgfpointadd{\pgfplotspointrelaxisxy{0}{0}}{\pgfqpointxy{\xmaxcoord}{0}}}{\pgfqpoint{0.5\yafaxiswidth}{\yafaxispad}}}
	\pgfpathmoveto{\pgfpointadd{\pgfpointadd{\pgfplotspointrelaxisxy{0}{0}}{\pgfqpointxy{0}{\ymincoord}}}{\pgfqpoint{\yafaxispad}{-0.5\yafaxiswidth}}}
	\pgfpathlineto{\pgfpointadd{\pgfpointadd{\pgfplotspointrelaxisxy{0}{0}}{\pgfqpointxy{0}{\ymaxcoord}}}{\pgfqpoint{\yafaxispad}{0.5\yafaxiswidth}}}
	\pgfusepath{stroke}
}

\pgfplotscreateplotcyclelist{yaf}{%
{yafcolor5,mark options={scale=0.75},mark=o}, 
{yafcolor2,mark options={scale=0.75},mark=square},
{yafcolor3,mark options={scale=0.75},mark=triangle},
{yafcolor4,mark options={scale=0.75},mark=o},
{yafcolor1,mark options={scale=0.75},mark=o},
{yafcolor8,mark options={scale=0.75},mark=o},
{yafcolor6,mark options={scale=0.75},mark=o},
{yafcolor7,mark options={scale=0.75},mark=o}} 

\pgfkeys{/pgf/number format/.cd,1000 sep={\,}}
\pgfplotsset{axis y line=left, axis x line=bottom,
	tick align=outside,
	tickwidth=\yafticklen,
	clip = false,
    x axis line style= {-, line width = 0pt, color=black!0},
    y axis line style= {-, line width = 0pt, color=black!0},
    x tick style= {line width = \yafaxiswidth, color=yafaxiscolor, yshift = \yafaxispad},
    y tick style= {line width = \yafaxiswidth, color=yafaxiscolor, xshift = \yafaxispad},
    x tick label style = {font=\small, yshift = \yaftlpad, inner xsep = 0pt},
    y tick label style = {font=\small, xshift = \yaftlpad},
    every axis y label/.style = {at = {(ticklabel cs:0.5)}, rotate=90, anchor=center, font=\small, yshift = -\yaflabelpad, inner sep = 0pt},
    every axis x label/.style = {at = {(ticklabel cs:0.5)}, anchor=center, font=\small, yshift = \yaflabelpad},
    x tick label style = {font=\small, yshift = 1pt},
    grid = major,
    major grid style  = {dash pattern = on 1pt off 3 pt},
	every axis plot post/.append style= {line width=\yafaxiswidth} ,
	legend cell align = left,
	legend style = {inner sep = 1pt, cells = {font=\scriptsize}},
	legend image code/.code={%
		\draw[mark repeat=2,mark phase=2,#1] 
		plot coordinates { (0cm,0cm) (0.15cm,0cm) (0.3cm,0cm) };%
	} 
}

\begin{document}

\title{Max-Min Diversification with Asymmetric Distances}


\author{Iiro Kumpulainen}
\email{iiro.kumpulainen@helsinki.fi}
\affiliation{%
  \institution{University of Helsinki}
  \city{Helsinki}
  \country{Finland}
}

\author{Florian Adriaens}
\email{florian.adriaens@helsinki.fi}
\affiliation{%
  \institution{University of Helsinki, HIIT}
  \city{Helsinki}
  \country{Finland}
}

\author{Nikolaj Tatti}
\email{nikolaj.tatti@helsinki.fi}
\affiliation{%
  \institution{University of Helsinki, HIIT}
  \city{Helsinki}
  \country{Finland}
}

\renewcommand{\shortauthors}{Iiro Kumpulainen, Florian Adriaens, \& Nikolaj Tatti}

\begin{abstract}
One of the most well-known and simplest models for diversity maximization is the Max-Min Diversification (MMD) model, which has been extensively studied in the data mining and database literature.
In this paper, we initiate the study of the Asymmetric Max-Min Diversification (AMMD) problem. The input is a positive integer $k$ and a complete digraph over $n$ vertices, together with a nonnegative distance function over the edges obeying the directed triangle inequality. The objective is to select a set of $k$ vertices, which maximizes the smallest pairwise distance between them.
AMMD reduces to the well-studied MMD problem in case the distances are symmetric, and has natural applications to query result diversification, web search, and facility location problems.
Although the MMD problem admits a simple $\frac{1}{2}$-approximation by greedily selecting the next-furthest point, this strategy fails for AMMD and it remained unclear how to design good approximation algorithms for AMMD. 

We propose a combinatorial $\frac{1}{6k}$-approximation algorithm for AMMD by leveraging connections with the Maximum Antichain problem. We discuss several ways of speeding up the algorithm and compare its performance against heuristic baselines on real-life and synthetic datasets.
\end{abstract}

\ccsdesc[500]{Theory of computation~Approximation algorithms analysis}
\ccsdesc[500]{Mathematics of computing~Graph theory}
\keywords{Max-Min Diversification, Asymmetry, Maximum Antichain}
\maketitle

\section{Introduction}
\label{sec:intro}
Diversity maximization is a fundamental problem
with natural applications to query result diversification \cite{drosou2012disc,deng2013complexity}, recommender systems \cite{castells2021novelty,kaminskas2016diversity, abbassi2013diversity}, information exposure in social networks \cite{matakos2020tell}, web search \cite{xin2006extracting, radlinski2006improving,divtopk,bhattacharya2011consideration}, feature selection \cite{zadeh2017scalable} and data summarization \cite{celis2018fair,mahabadi2023improved,CHANDRA2001438,zheng2017survey}.
In a typical diverse data selection problem, one is interested in finding a \emph{diverse} group of items within the data, meaning that the selected items should be highly dissimilar to each other.
Typically, one is interested in finding sets of fixed and small sizes.

\begin{figure}
\centering
\includegraphics{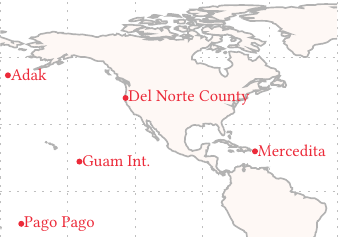}

\caption{The result of our algorithm \algbacf{} for $k=5$ on the \emph{Flights Delay} dataset \cite{FlightsDelay}. It shows 5 airports located in U.S. territory which require a large flight time between any two of them, averaged over all flights in 2015. The airports are all spread out over different U.S. territories (Puerto Rico, Guam, American Samoa) and/or states (Alaska, California).
\label{fig:introflightex}}

\end{figure}

A simple and popular model for diversity maximization is the Max-Min Diversification (MMD) model. In this model, the data is represented as a universe of points $U$ of size $|U|=n$, and we are given a nonnegative distance function $d: U \times U \to \mathbb{R}_{\geq 0}$ measuring the dissimilarity between points. If $d(u,v)$ is large, then $u$ and $v$ are highly dissimilar and vice versa. 
The space $(U,d)$ is a pseudometric\footnote{Contrary to a metric, a pseudometric allows $d(u,v)=0$ for distinct $u \neq v$.}, implying that for all $u,v,w \in U: d(u,u) = 0$, $d(u,v) = d(v,u)$ (symmetry), and $d(u,v) \leq d(u,w)+ d(w,v)$ (triangle inequality).
The \emph{diversity score} $\diver{S}$ of a set of points $S \subseteq U$ is then defined as the smallest distance between any two distinct points in $S$. Formally,
\begin{align}
\label{def:divscore}
\diver{S} = \min \limits_{\substack{u,v \in S \\ u \neq v}} d(u,v).
\end{align}
The Max-Min Diversification (MMD) problem asks, given an integer $k$, to find a set $S \subseteq U$ of size $|S| = k$ which maximizes $\diver{S}$.
The MMD problem has originally been studied in the operations research literature as the max-min dispersion problem or $k$-dispersion problem \cite{kuby1987programming,erkut1990discrete,tamir1991obnoxious,ravi1994heuristic}.
The problem naturally describes a facility location problem where the proximity of selected facilities is undesirable.
Facilities that are close to each other might be undesirable due to economic or safety reasons.
Examples include the placement of store warehouses, nuclear power plants, or ammunition depot storages.

This paper initiates the study of the \emph{asymmetric generalization} of the MMD problem, which will be referred to as the Asymmetric Max-Min Diversification problem (AMMD). AMMD allows for asymmetric distances $d(u,v) \neq d(v,u)$ for $u \neq v \in U$, while retaining all other properties of the MMD model such as the triangle inequality.
A formal problem statement is given by Problem~\ref{prob:ammd}.

There are two good reasons for studying AMMD.
First, it can be argued that many real-world similarity measures and distances are in fact not symmetric. Straightforward examples are one-way traffic roads, uphill/downhill traveling, one-sided friendships and asymmetric flight times between airports, among others~\cite{kunegis2013konect, snapnets, tsplib}. Hence, one should consider studying models incorporating this property.

Secondly, for many optimization problems, the asymmetric generalization requires different, often more complicated, approximation algorithms than their symmetric restrictions. 
Additionally, they might also be significantly harder to approximate. An example is the $\Omega(\log^* n)$-hardness result \cite{chuzhoy2005asymmetric} for asymmetric $k$-center,\!\footnote{The iterated logarithm $\log^* n$ is defined as the number of times the $\log$ function can be iteratively applied to $n$ until the result is less than or equal to $1$.} and the corresponding $\bigO(\log^* n)$-approximation algorithms \cite{archer2001two,panigrahy1998ano}. Another example is the asymmetric traveling salesman problem, for which only recently a constant-factor approximation has been discovered \cite{svensson2020constant}.

Designing approximation algorithms for AMMD is seemingly also more challenging than for MMD.
While MMD admits a greedy $\frac{1}{2}$-approximation by iteratively selecting the next-furthest point until a set of size $k$ is obtained \cite{tamir1991obnoxious,ravi1994heuristic},
this greedy strategy can perform arbitrarily bad on asymmetric instances. Figure~\ref{fig:toyexample_greedy} shows such an example.

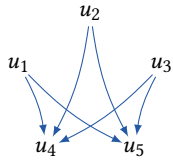
\begin{figure}
  \centering
  \begin{tikzpicture}
  \node[inner sep=0pt, circle] (u2) at (90:1cm) {$u_2$};
  \node[inner sep=0pt, circle] (u1) at (162:1cm) {$u_1$};
  \node[inner sep=0pt, circle] (u4) at (234:1cm) {$u_4$};
  \node[inner sep=0pt, circle] (u5) at (306:1cm) {$u_5$};
  \node[inner sep=0pt, circle] (u3) at (378:1cm) {$u_3$};

  \draw[->, >=latex, yafcolor5] (u1) edge[bend left = 10] (u4);
  \draw[->, >=latex, yafcolor5] (u1) edge[bend right = 10] (u5);

  \draw[->, >=latex, yafcolor5] (u2) edge[bend left = 10] (u4);
  \draw[->, >=latex, yafcolor5] (u2) edge[bend right = 10] (u5);

  \draw[->, >=latex, yafcolor5] (u3) edge[bend left = 10] (u4);
  \draw[->, >=latex, yafcolor5] (u3) edge[bend right = 10] (u5);
  \end{tikzpicture}
  \caption{An input graph to AMMD on which the known $\frac{1}{2}$-approximations for symmetric MMD perform poorly. A blue directed edge $(u,v)$ indicates that the distance $d(u,v)$ is zero. All other distances (edges not drawn) are equal to some $R>0$. These distances satisfy the directed triangle inequality. For $k=3$ the optimal solution is $O =\{u_1,u_2,u_3\}$, with optimum $\diver{O}=R$. The greedy algorithm from \cite{tamir1991obnoxious} picks an arbitrary initial node. If $u_4$ or $u_5$ are selected we get a solution value of zero, regardless of how the remaining two nodes are selected. Similarly, the algorithm of \cite{ravi1994heuristic} initially selects a node pair with maximum distance. This could be $(u_4,u_5)$, since ties are broken arbitrarily, again resulting in a zero-valued solution.}
\label{fig:toyexample_greedy}
\end{figure} 

\ptitle{Results and techniques}
Throughout the paper we let $\omega < 2.373$ be the matrix multiplication exponent. We refer to \cite{alman2021refined} for the current best bound.

Our main result is the following theorem.
\begin{theorem}
Our algorithms \algbac{}, \algbacb{} and \algbacf{} from Section~\ref{sec:speeding} approximate AMMD within a factor of $\frac{1}{6k}$. Their worst-case time complexity is respectively $\bigO(n^{2+\omega} \log k)$, $\bigO(n^{2+\omega} \log k \log
n)$ and $\bigO(n^{\omega} \log k \log n )$.
\end{theorem}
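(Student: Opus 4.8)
The plan is to separate the statement into a single shared correctness argument and three running-time analyses, since \algbac{}, \algbacb{} and \algbacf{} are best viewed as one algorithmic template equipped with progressively faster back-ends. The template first \emph{guesses} the optimal value $R^{*}=\diver{O}$; because $R^{*}$ equals one of the $\bigO(n^2)$ pairwise distances, it suffices to treat each distinct distance as a candidate threshold $\tau$. For a fixed $\tau$ I would build the \emph{threshold digraph} with an arc $u\to v$ whenever $d(u,v)\le\tau$, collapse its strongly connected components into the condensation DAG, and compute a maximum antichain there. The central structural claim is that a maximum antichain of this DAG translates into a feasible AMMD solution whose diversity is bounded below in terms of $\tau$; the three algorithms differ only in how this antichain is computed and in how the candidate thresholds are searched.

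For correctness I would prove two matching lemmas. The easy direction is an \emph{extraction lemma}: if two components are incomparable in the condensation, then by the directed triangle inequality there can be no short arc between them in either direction, so any system of representatives of an antichain is pairwise far apart in \emph{both} directions; hence representatives of a size-$k$ antichain form a set $S$ with $\diver{S}>\tau$. The harder direction is a \emph{certificate lemma}: I must show that for $\tau$ of order $R^{*}/(6k)$ the maximum antichain necessarily has size at least $k$, so that the template indeed returns a $k$-set. The natural argument is a Dilworth-type dichotomy: if the maximum antichain were smaller than $k$, the vertices would be coverable by fewer than $k$ reachability chains, forcing two optimal points into a common chain and hence a short-arc path between them; bounding the metric length accumulated along that path via the directed triangle inequality then contradicts $\diver{O}=R^{*}$ once $\tau$ is small enough. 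Tracking the constants through this accumulation is exactly where the factor $\tfrac16$ and the linear dependence on $k$ should appear.

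I expect this certificate lemma to be the main obstacle. The naive accumulation bound only controls a reachability path by (number of hops)$\,\times\,\tau$, and a path may use up to $n$ hops, which would yield a useless $\tfrac1n$-type guarantee rather than $\tfrac1{6k}$. The crux is therefore to replace the hop count by a quantity that is $\bigO(k)$ — for instance by restricting to shortest-hop paths, by charging accumulated distance against the few chains produced by Dilworth, or by iterating the construction across distance scales — so that the loss is proportional to $k$ and not to $n$. Obtaining the clean constant $6$ will require care in how $\tau$ is rounded to a distance value that is actually present and in how the representatives are selected.

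Finally, the running times follow from bounding the cost of the antichain computation that each back-end uses, all resting on the $\bigO(n^\omega)$ transitive-closure and condensation step computed by fast matrix multiplication, together with a bipartite-matching formulation of the maximum antichain. For \algbac{} I would charge $\bigO(n^2)$ candidate thresholds, each handled in $\bigO(n^\omega)$ with an extra $\log k$ factor from an internal search tied to the target size $k$, giving $\bigO(n^{2+\omega}\log k)$. For \algbacb{} and \algbacf{} I would bound the refined back-ends — a secondary search over thresholds contributing the $\log n$ factor, and for \algbacf{} an algebraic speedup of the antichain step that removes the $n^2$ overhead — and then verify that the resulting products match the stated $\bigO(n^{2+\omega}\log k\log n)$ and $\bigO(n^{\omega}\log k\log n)$ bounds. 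Combining the shared approximation guarantee with these three running-time bounds yields the theorem.
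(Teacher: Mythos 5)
There is a genuine gap, and it sits exactly where you predicted: your ``certificate lemma'' is false as stated. For a threshold of order $R^*/(6k)$ the maximum antichain of the threshold digraph need \emph{not} have size $k$: a long directed path of tiny hops can make two optimal points mutually reachable (indeed it can collapse the whole condensation to a single chain of width one) while still accumulating total length above $R^*$, because a reachability path may use up to $n$ hops. None of the fixes you sketch (Dilworth charging, shortest-hop paths as a tool for enlarging the antichain, multi-scale iteration) resolves this, because the obstruction is real --- the antichain alone is not a sufficient certificate. The paper needs two ideas your plan does not contain. First, a preprocessing step (\algclust{}) greedily clusters $U$ by the symmetrized distance $\dmax$ and keeps only cluster centers; this costs a factor of $3$ in the guarantee but ensures every surviving pair has $\dmax \geq R$, which in turn forces every cycle, and every path between two optimal points, in the threshold graph with cutoff $R/(2k)$ to contain at least $2k+2$ vertices. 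Second, when the maximum antichain has size less than $k$, the algorithm does not try to prove that impossible; it instead extracts an \emph{independent set that is not an antichain}: the odd-indexed vertices of a shortest path of hop-length $2k-1$ (no chords because the path is shortest, no backward arcs because the condensation is a DAG), or of a chordless cycle of length at least $2k+2$. The factor $\frac{1}{6k}$ is exactly $\frac{1}{3}\cdot\frac{1}{2k}$ arising from these two steps.

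Your running-time accounting is also misattributed. The $\log k$ factor does not come from a search inside the antichain computation; it comes from detecting a shortest path of hop-length $2k-1$ by repeated squaring of the adjacency matrix, costing $\bigO(n^\omega \log k)$ per threshold, which dominates the antichain cost. The speedup of \algbacf{} over \algbac{} is not an algebraic improvement of the antichain step but the replacement of the $\bigO(n^2)$-fold enumeration of candidate thresholds by an $\bigO(\log n)$-step binary search (valid because every $R \leq R'$ yields a feasible solution), while \algbacb{} adds a secondary binary search over larger cutoffs for each feasible $R$, which is where its extra $\log n$ comes from.
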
 

To the best of our knowledge, these are the first approximation algorithms for AMMD with a non-trivial guarantee. \algbac{} is the vanilla algorithm. \algbacb{} is a refined variant looking to improve solution quality, while \algbacf{} is a faster variant.

We also show that our practical implementation run fast on real-world and synthetic data and produces solutions with high diversity scores (see Section~\ref{sec:exps}). Figure~\ref{fig:introflightex} shows the output of \algbacf{} for $k=5$ on a real-world dataset consisting of U.S. domestic flight data by large air carriers in 2015 \cite{FlightsDelay}. Between the five airports, the shortest average flight time is from Adak Airport to Del Norte Country Airport. Flying from Mercedita Airport to Guam Int. Airport on average took 10\% longer than flying in the reverse direction. This is the largest ratio of bidirectional flying times among all routes between the five airports.

The high-level idea behind our algorithm is to create auxiliary graphs based on the distances in the AMMD instance. In these auxiliary graphs, an independent set of size $k$ will lead to good solutions. In order to efficiently extract the independent sets, we need to equip the auxiliary graphs with certain properties. We show that by first clustering the points in the AMMD instance, the graphs must contain a subgraph of a certain type (antichain, a large chordless cycle, or a long shortest path with no backward edges). We can search for all 3 subgraphs in polynomial time, and we can extract an independent set from them in polynomial time.

\ptitle{Roadmap} Section~\ref{sec:related} discusses related work on diversity maximization and the MMD problem in particular. Section~\ref{sec:notation} provides a formal problem statement and the necessary background literature on maximum antichains.
Section~\ref{sec:algosnearlysym} briefly discusses the performance of the greedy algorithm on instances that are nearly symmetric. Section~\ref{sec:approx} details our algorithms for AMMD. We present experimental evaluation in Section~\ref{sec:exps} and conclude the paper with remarks in Section~\ref{sec:conclusions}.

\section{Related Work}
\label{sec:related}
\ptitle{MMD} The MMD problem was shown to be \NP -complete by \citet{wang1988study}.
Both \citet{tamir1991obnoxious} and \citet{ravi1994heuristic} showed that the greedy next-furthest point algorithms yield a $\frac{1}{2}$-approximation for MMD. \citet{moumoulidou2020diverse} noted that these greedy algorithms are essentially identical to the well-known Gonzalez-heuristic \cite{gonzalez1985clustering} which gives a 2-approximation for the $k$-center problem. By a reduction from max. clique, \citet{ravi1994heuristic} showed that for any $\epsilon>0$ MMD is \NP -hard to approximate within a factor of $\frac{1}{2}+\epsilon$, thereby showing the optimality of the Gonzalez-heuristic in terms of worst-case performance.
\citet{ravi1994heuristic} also showed that MMD without the triangle inequality constraints is \NP -hard to approximate within any multiplicative factor.

\citet{akagi2018exact} showed that MMD can be solved exactly by solving $\bigO(\log n)$ $k$-clique problems, resulting in an exact algorithm in $\bigO(n^{\omega k/3}\log n)$ time by using the $k$-clique algorithm from \cite{nevsetvril1985complexity}.

\ptitle{Fairness and matroid constraints} Fairness variants of the MMD problem have recently gained a lot of research attention \cite{Addanki0MM22,celis2018fair,wang2023max,Moumoulidou0M21,wang2022streaming,wang2023fair}.
A typical fair MMD problem statement assumes the universe is partitioned into $|C|$ disjoint groups. Solutions are now required to contain a certain amount of points from each group. It is still undetermined whether constant-factor approximation algorithms exist for arbitrary $|C|$ \cite{wang2023max,Addanki0MM22}.
The fairness constraints are a special case of diversity maximization under matroid constraints \cite{abbassi2013diversity,borodin2012max,ceccarello2020general,ceccarello2018fast,cevallos2017local}.

\ptitle{Other objective functions} \citet{CHANDRA2001438} considered a range of diversity problems with different objective functions besides MMD. Among them is the popular Max-Sum Diversification (MSD) problem, which aims to maximize the sum of pairwise distances instead of the minimum \cite{kuby1987programming,abbassi2013diversity,aghamolaei2015diversity,bauckhage2020adiabatic,
borassi2019better, indyk2014composable, hassin1997approximation,bhattacharya2011consideration}. 
\citet{cevallos2017local} provided a local search $(1-\bigO(1/k))$-approximation for MSD for distances of negative type (which includes several non-metric distances), subject to general matroid constraints. \citet{zhang2020maximizing} consider a variant of MSD adapted to a clustered data setting. Besides MSD there are numerous other objective functions for diversity maximization, depending on the topic area. We refer to~\cite{zheng2017survey} for a summary of results in query result diversification.

Nonetheless, all of these works consider symmetric distances between points. We also note that the asymmetric generalization of the MSD problem is trivially approximated by considering the symmetric distance $d'$ formed by $d'(u,v) = d(u,v)+d(v,u)$ and applying any MSD approximation algorithm on $d'$. This strategy does not lead to a guarantee for AMMD, as the distance in one direction might be very small compared to the other direction.

\section{Notation and Preliminaries}
\label{sec:notation}
\ptitle{Graphs}
Given a directed graph (digraph) $G=(V, A)$, vertex $v$ is reachable from vertex $u$ if there exists a path from $u$ to $v$ in $G$. Otherwise, $v$ is called unreachable from $u$. The transitive closure $\TC(G)$ of a digraph $G$ is a digraph with the same nodes as $G$, and with edges $(u,v) \in \TC(G)$ if $v$ is reachable from $u$ in $G$.
We say a digraph $G$ is transitively closed if $G=\TC(G)$.
A set $X \subseteq V$ is called independent if there are no edges in the induced subgraph $G[X]$.
An edge $(i,j) \in A$ is often abbreviated as $ij$.

\ptitle{AMMD problem}
As mentioned in the introduction, we will assume a pseudometric space $(U,d)$ without symmetry constraints on the distance function $d$, thus satisfying $d(u,u)=0$, nonnegativity $d(u,v) \geq 0$ and the directed triangle inequality $d(u,v) \leq d(u,w)+d(w,v)$ for all $u,v,w \in U$. Instances of AMMD can be viewed as complete digraphs or distance matrices representing the distances between each pair of points. An example of an AMMD instance space is the
\emph{metric closure} of a nonnegatively weighted digraph, formed by defining distances as weighted shortest path lengths between all the pairs of nodes in the graph.

We are interested in the following problem.
\begin{prob}[AMMD]
\label{prob:ammd}
Given $(U, d)$ and integer $k$, find a set $O \subseteq U$ with $\abs{O} = k$ such that $\diver{O}$, as defined in Eq.~\ref{def:divscore}, is maximized.
\end{prob}
Throughout the paper, we let $R^* = \diver{O}$ be the optimal value for an AMMD instance space $(U,d)$ with parameter $k$, where $O$ is a corresponding optimal set of $k$ points. We assume $R^* > 0$, as when $R^*=0$ any set of $k$ points would be an optimal solution.


With $d$ we define two new functions $\dmin$ and $\dmax$ as $\dmin(u,v) = \min\{d(u,v), d(v,u)\}$ and $\dmax(u,v) = \max\{d(u,v), d(v,u)\}$ for all $u,v \in U$. Both functions are clearly symmetric, and $\dmax$ satisfies the triangle inequality while $\dmin$ does not. Finally, using the $\dmin$ and $\dmax $ functions we quantify the asymmetry of a given pseudometric space. A pseudometric space $(U, d)$ is called \emph{$\epsilon$-symmetric} for some $\epsilon \geq 0$ if for all $u,v \in U: (1+\epsilon)\dmin(u,v) \geq \dmax(u,v)$.

\ptitle{Maximum antichains}
\label{sec:maxanti}
We will see that there is a connection between AMMD and the maximum independent set (MIS) problem.
Unfortunately, finding a MIS is not feasible. Instead, we will look for a maximum antichain (MA).
An \emph{antichain} of a digraph $G=(V, E)$ is a set of vertices that are pairwise unreachable in $G$. In other words, an antichain is an independent set in $\TC(G)$,
and therefore also in $G$.
We have the following problem.
\begin{prob}[MA]\label{prob:maxa}
Given a digraph $G$ and integer $k$, find an antichain of size $k$ or decide no such set exists.
\end{prob}

Unlike finding a MIS, MA can be solved in polynomial time.
A common approach is to reduce the problem into a minimum flow (with edge demands) problem~\cite{ntafos1979path,marchal2018parallel,
caceres2023minimum,pijls2013another}. Such reduction has $2|V|+ 2$ vertices and $3|V|+|E|$ edges. The minimum flow problem is then solved by reducing it to a maximum flow problem, we refer to \cite{caceres2023minimum} for the explicit construction.

This reduction has been combined with the recent breakthrough paper on maximum flows by~\citet{chen2022maximum}
to obtain the following result.
\begin{prop}[Corollary 2.1 \cite{caceres2022minimum}\label{caceres}]
Given a digraph $G=(V,E)$ with $|V|=n$ and $|E|=m$, a maximum antichain of $G$ can be computed in $\bigO(m^{1 + o(1)} \log n)$ time with high probability.
\end{prop}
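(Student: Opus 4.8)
The plan is to derive the bound by chaining together three reductions and then invoking the almost-linear-time maximum flow algorithm of \citet{chen2022maximum}. First I would reduce the maximum antichain problem to a minimum chain cover problem. Since any two vertices lying in a common strongly connected component are mutually reachable, they can never appear together in an antichain; hence I would first condense $G$ into its DAG of strongly connected components in $\bigO(n+m)$ time, after which reachability induces a genuine partial order on the condensed vertices. By Dilworth's theorem, the size of a maximum antichain in this poset equals the minimum number of chains needed to cover it, so computing a maximum antichain is equivalent to computing a minimum chain cover together with a witnessing antichain.

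Next I would encode the minimum chain cover as a minimum flow with edge demands, using the vertex-splitting gadget referenced above: split each vertex $v$ into an in-copy and an out-copy joined by an edge carrying a lower bound (demand) of $1$ to force $v$ to be covered, keep the original edges of $G$ between the copies, and attach a global source and sink. As noted, this network has $2n+2$ vertices and $3n+m$ edges. The lower bounds force every vertex to be covered, and a decomposition of a minimum feasible flow into source-to-sink paths yields the covering chains, each realized as a directed path in $G$ (chains may share vertices), whose number equals the flow value; thus the minimum flow value equals the minimum chain cover size, hence the maximum antichain size. A minimum flow subject to lower bounds is then converted to an ordinary maximum flow instance by the textbook super-source/super-sink transformation, which preserves the asymptotic size of the network and keeps all capacities bounded by $\bigO(n)$.

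Finally I would run the algorithm of \citet{chen2022maximum} on this instance. On a graph with $\bigO(n+m)$ edges and integer capacities bounded by $\bigO(n)$ it computes a maximum flow in almost-linear time; tracking the dependence on the capacity bound $U=\bigO(n)$ contributes the $\log n$ factor and yields the claimed $\bigO(m^{1+o(1)}\log n)$ bound, with the high-probability qualifier inherited directly from their randomized routine.

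The hard part will not be the running time but the two correctness ingredients. The first is establishing precisely that minimum feasible integral flows are in bijection with chain covers, so that the Dilworth correspondence transfers exactly through the demand-flow gadget. The second, and more delicate for an algorithm that must return a \emph{set} rather than a number, is recovering an explicit maximum antichain from the optimal flow: this is done by reading off a minimum cut in the residual network and translating it, via complementary slackness, into a set of pairwise-unreachable vertices of the guaranteed size. I would then need to verify that this extraction fits within the same $\bigO(m^{1+o(1)}\log n)$ budget, which it does, since a maximum flow and hence a minimum cut are already produced by the previous step.
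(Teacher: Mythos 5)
Your proposal is correct and follows essentially the same route the paper sketches in Section~3: condense to a DAG, apply the Dilworth/minimum-chain-cover duality, encode it as a minimum flow with unit lower bounds via the vertex-splitting gadget with $2n+2$ vertices and $3n+m$ edges, reduce to maximum flow, and invoke the almost-linear-time solver of \citet{chen2022maximum}, recovering the antichain from the resulting cut. The paper itself only cites this as Corollary~2.1 of \citet{caceres2022minimum} rather than reproving it, so your reconstruction matches the intended argument.
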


There is a subtle complication in using the solver by~\citet{chen2022maximum}.
The algorithm solves the maximum flow problem with high probability, that is, there is a small $o(1)$ chance of failure.\!\footnote{By definition, this probability goes to 0 as the instance size grows.}
The issue is that in certain cases we need to solve MA several times, say $s$ times, and \emph{all} of them need to succeed. 
Here, the asymptotic probability of failure depends on how quickly the failure rate goes to 0 for a single MA as the instance size grows.
However, if we make the failure rate of a single instance $o(s^{-1})$, then the union bound immediately shows that the probability of a single failure
in $s$ instances is then in $o(1)$. The standard technique for guaranteeing an $o(s^{-1})$ failure bound is to run the solver $\bigO(\log s)$ times
and select the best result.
To summarize, we have the following proposition.
\begin{prop}\label{prop:multima}
Given $s$ digraphs $G_i = (V_i,E_i)$ with $\abs{V_i} \leq n$ and $\abs{E_i} \leq m$, computing $s$ maximum antichains of each $G_i$ can be done in $\bigO(s m^{1 + o(1)} \log n \log s)$ time with high probability.
\end{prop}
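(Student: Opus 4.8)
The plan is to reduce Proposition~\ref{prop:multima} to Proposition~\ref{caceres} by a simple union-bound argument, exactly along the lines sketched in the paragraph immediately preceding the statement. First I would observe that Proposition~\ref{caceres} gives, for a \emph{single} digraph with $n$ vertices and $m$ edges, a maximum antichain in $\bigO(m^{1+o(1)} \log n)$ time with failure probability $p = o(1)$. The goal is to run this on $s$ digraphs, each of size at most $n$ vertices and $m$ edges, in such a way that the probability that \emph{any one} of the $s$ runs fails is still $o(1)$, while keeping the total running time as claimed.

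The key idea is amplification via repetition. For each digraph $G_i$, instead of running the solver once, I would run it $\bigO(\log s)$ times independently and return the best (largest) antichain found among those repetitions. Since each individual run succeeds with probability $1-p$ where $p = o(1)$, the probability that \emph{all} $\bigO(\log s)$ repetitions on $G_i$ fail is at most $p^{\Theta(\log s)}$, which can be driven below $o(s^{-1})$ for the constant in the $\bigO(\log s)$ chosen appropriately; this uses that $p$ is bounded away from $1$ for large instances. Applying the union bound over all $s$ digraphs, the probability that the amplified procedure fails on at least one $G_i$ is at most $s \cdot o(s^{-1}) = o(1)$, giving the desired ``with high probability'' guarantee for the whole collection.

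For the running time, each amplified computation on a single $G_i$ costs $\bigO(\log s)$ invocations of the solver, i.e. $\bigO(m^{1+o(1)} \log n \log s)$ time, and summing over the $s$ digraphs yields the claimed $\bigO(s\, m^{1+o(1)} \log n \log s)$ bound. Here I am using $\abs{V_i} \leq n$ and $\abs{E_i} \leq m$ to bound each individual call uniformly by the worst case.

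The main subtlety — and the only place requiring care — is the claim that $\bigO(\log s)$ repetitions suffice to push the single-digraph failure probability from $o(1)$ down to $o(s^{-1})$. This is not automatic for an arbitrary $o(1)$ sequence, since an $o(1)$ quantity need not be bounded below $1$ by a fixed constant; however, for sufficiently large instances the per-run failure probability is at most some constant $c < 1$, so $c^{\Theta(\log s)} = s^{-\Theta(1)} = o(s^{-1})$ for an appropriate choice of the hidden constant, and the finitely many small instances can be handled separately (e.g.\ by exact computation). I would state this amplification step explicitly and note that it is precisely the standard boosting technique for Monte Carlo algorithms, so that the union bound then closes the argument cleanly.
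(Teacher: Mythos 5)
Your proposal is correct and follows essentially the same route as the paper, which proves this proposition exactly by the amplification-plus-union-bound argument sketched in the paragraph preceding the statement: run the solver of Proposition~\ref{caceres} $\bigO(\log s)$ times per instance, keep the best antichain, drive the per-instance failure probability to $o(s^{-1})$, and union-bound over the $s$ instances. Your extra remark that an $o(1)$ failure probability is only bounded away from $1$ for sufficiently large instances is a careful refinement the paper glosses over, but it does not change the argument.
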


We should point out that in practice we do not use the solver by~\citet{chen2022maximum} due to its high complexity and high constants.
Instead, we settle for a more practical algorithm by~\citet{goldberg2008partial} that solves maximum flow instances in $\bigO(n^2\sqrt{m})$ time. 

We conclude this section by discussing alternative techniques for solving maximum antichains.
The first approach
requires the computation of the transitive closure of the digraph. The idea is to reduce the MA problem to finding a maximum cardinality matching in a bipartite graph encoding the reachability relation between vertices~\cite{felsner2003recognition, dilworth1987decomposition, fulkerson1956note}.
The downside of this approach is that computing the transitive closure is expensive.

Moreover, there is a large body of work on width-parametrized\footnote{The size of the maximum antichain is often called the \emph{width} of the graph.} algorithms for the MA problems~\cite{caceres2023minimum,caceres2022minimum,caceressoda,kowaluk2008path,makinen2019sparse,felsner2003recognition}. These algorithms are practical in a small-width regime~\cite{caceres2023minimum}.
However, we opted to use the maximum flow approach as this was sufficiently fast for us, while also being the asymptotically fastest algorithm.
\section{Nearly symmetric instances}
\label{sec:algosnearlysym}
Greedily selecting the next-furthest points until $k$ points are selected is a $\frac{1}{2}$-approximation for symmetric MMD \cite{tamir1991obnoxious,ravi1994heuristic}, but can perform arbitrarily badly
on asymmetric instances as shown by the example in Figure~\ref{fig:toyexample_greedy}.

Theorem~\ref{thm:epsisymmgreedy} generalizes this result to asymmetric instances.
It states that on asymmetric instances that are $\epsilon$-symmetric (see Section~\ref{sec:notation}), the greedy approach applied on the $\dmin$ distances yields a $\frac{1}{2+\epsilon}$-approximation, and this ratio is tight.
The proof of Theorem~\ref{thm:epsisymmgreedy} can be found in the Appendix.

\begin{theorem}
\label{thm:epsisymmgreedy}
For any $\epsilon \geq 0$, Algorithm~\ref{algo:greedydmin} is a $\frac{1}{2+\epsilon}$-approximation on $\epsilon$-symmetric instances and can be implemented to run in $\bigO(kn)$ time.
Additionally, there exist $\epsilon$-symmetric instances for which Algorithm~\ref{algo:greedydmin} cannot achieve a performance ratio better than $\frac{1}{2+\epsilon}$.  
\end{theorem}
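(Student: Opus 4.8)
The plan is to reduce the whole claim to a statement about the symmetric surrogate $\dmin$. Since $d(u,v)\ge \dmin(u,v)$ for all $u,v$, the output set $S=\{s_1,\dots,s_k\}$ satisfies $\diver{S}\ge \min_{i\neq j}\dmin(s_i,s_j)$, so it suffices to lower bound the $\dmin$-diversity of the greedy set by $\frac{1}{2+\epsilon}R^*$. I would first record that the optimal set $O$ is feasible for this surrogate: any two $o,o'\in O$ have $d(o,o')\ge R^*$ and $d(o',o)\ge R^*$, hence $\dmin(o,o')\ge R^*$. For the running time I would keep, for every point, its current $\dmin$-distance to the selected set in an array of length $n$; each of the $k$ rounds then scans the $n$ points once to pick the maximizer and to refresh the array, giving $\bigO(kn)$.

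Next I would set up the Gonzalez-style bookkeeping, noting carefully that it uses only that a minimum over a growing index set is non-increasing and never invokes a triangle inequality for $\dmin$ (which would fail). Write $d_j=\min_{i<j}\dmin(s_j,s_i)$ for the score of $s_j$ at the moment it was added. Then $d_2\ge d_3\ge\cdots\ge d_k$, so every selected pair has $\dmin(s_i,s_j)\ge d_j\ge d_k$ and therefore $\min_{i\neq j}\dmin(s_i,s_j)= d_k$; it remains to show $d_k\ge\frac{1}{2+\epsilon}R^*$. Because $s_k$ was chosen to maximize the $\dmin$-distance to $\{s_1,\dots,s_{k-1}\}$, every point of $U$ — in particular every point of $O$ — lies within $\dmin$-distance $d_k$ of the first $k-1$ greedy points. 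Mapping each of the $k$ points of $O$ to a nearest point among $s_1,\dots,s_{k-1}$ sends $k$ points into $k-1$ slots, so by pigeonhole two distinct optimal points $o,o'$ share a nearest greedy point $s_\ell$ with $\dmin(o,s_\ell)\le d_k$ and $\dmin(o',s_\ell)\le d_k$.

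The crux — and the step I expect to be the main obstacle — is extracting the factor $2+\epsilon$ rather than the naive $2+2\epsilon$ from this collision. Applying the directed triangle inequality in both orientations through $s_\ell$ gives $d(o,o')\le d(o,s_\ell)+d(s_\ell,o')$ and $d(o',o)\le d(o',s_\ell)+d(s_\ell,o)$. The idea is to add these rather than bound each in isolation, since the two directed distances between a fixed pair are exactly $\dmin$ and $\dmax$, so $d(o,s_\ell)+d(s_\ell,o)=\dmin(o,s_\ell)+\dmax(o,s_\ell)\le(2+\epsilon)\dmin(o,s_\ell)$ by $\epsilon$-symmetry, and likewise for $o'$. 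Summing yields $2R^*\le d(o,o')+d(o',o)\le (2+\epsilon)\bigl(\dmin(o,s_\ell)+\dmin(o',s_\ell)\bigr)\le 2(2+\epsilon)d_k$, hence $d_k\ge\frac{1}{2+\epsilon}R^*$ and thus $\diver{S}\ge\frac{1}{2+\epsilon}R^*$. Bounding each direction separately would pair two long ($\dmax$) edges and lose the factor; recognizing that one short and one long directed edge always combine to at most $(2+\epsilon)\dmin$ is the heart of the argument.

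Finally, for tightness I would exhibit a hub construction on $k+1$ points $\{c,o_1,\dots,o_k\}$ with $\rho:=\frac{1}{2+\epsilon}R^*$, setting the short directed edges $d(o_i,c)=\rho$, the long ones $d(c,o_i)=(1+\epsilon)\rho$, and $d(o_i,o_j)=(2+\epsilon)\rho=R^*$ for $i\neq j$. A direct check shows each pair satisfies $\dmax\le(1+\epsilon)\dmin$, so the instance is $\epsilon$-symmetric, and every directed triangle inequality holds (the binding one, $d(o_i,o_j)\le d(o_i,c)+d(c,o_j)=\rho+(1+\epsilon)\rho$, is tight). The optimum is $\{o_1,\dots,o_k\}$ with value $R^*$, whereas if the greedy algorithm takes $c$ as its (arbitrary) initial point, each $o_i$ is at $\dmin$-distance exactly $\rho$ from $c$, so the returned set contains a pair $\{c,o_i\}$ with $\diver{\cdot}=\min\{\rho,(1+\epsilon)\rho\}=\rho$, giving ratio exactly $\frac{1}{2+\epsilon}$ for every $k$. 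The only delicate point to state explicitly is the starting rule of Algorithm~\ref{algo:greedydmin}: the bound is matched for the worst-case choice of the arbitrary seed.
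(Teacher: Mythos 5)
Your proof is correct and follows essentially the same route as the paper: both reduce to the symmetric surrogate $\dmin$ and hinge on the fact that two points within $\dmin$-distance $r$ of a common point are within $\dmin$-distance $(2+\epsilon)r$ of each other --- the paper proves this by a case split on which orientation of the edge to the common point is shorter and phrases the selection argument via disjoint balls around the optimal points, while you prove it by summing the two directed triangle inequalities and phrase it as a pigeonhole collision at the final greedy step; these are interchangeable. Your tightness construction (a $(k+1)$-point hub, valid for every $k$) mildly generalizes the paper's three-point $k=2$ instance, and the $\bigO(kn)$ implementation argument is identical.
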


\begin{algorithm}[t]
\caption{Greedy with $\dmin$-distances.}
\label{algo:greedydmin}
\begin{algorithmic}[1]
\Require space $(U, d)$ and integer parameter $k\geq 2$.
\State $v \define \text{arbitrary vertex from } U$.
\State $S \define \{v\}$.
\While {$|S|<k$}
	\State $v \define \text{arg} \max_{u \in U} \dmin (u,S)$.
	\State $S \define S \cup \{v\}$.
\EndWhile
\Ensure the set $S$. 
\end{algorithmic}
\end{algorithm}

\section{Ball-and-antichain method}
\label{sec:approx}
Section~\ref{sec:approxnk2} details a straightforward approximation algorithm for AMMD, exploiting the polynomial time complexity of the MA problem in digraphs, as discussed in the previous section.
This algorithm has an approximation guarantee of $\frac{1}{n-k+1}$, which is not very useful in a typical regime of small $k$, but it gives insight into the use of the MA problem for approximating AMMD.

In Section~\ref{sec:approx16k} we modify the algorithm from Section~\ref{sec:approxnk2} by first clustering the points based on the $\dmax$ distances between them. The subspace induced by the cluster centers has some very useful properties, which leads to an approximation algorithm with a multiplicative guarantee of $\frac{1}{6k}$ for AMMD.

Finally, in Section~\ref{sec:speeding}, we discuss improvements to look for better solutions and speed up the algorithm while maintaining the approximation guarantee.

\begin{algorithm}[t]
\caption{Naive Maximum Antichain method.}
\label{algo:app_ma}
\begin{algorithmic}[1]
\Require space $(U, d)$ and integer parameter $k\geq 2$.
\ForAll {$R \in \{d(i,j)>0 \mid i, j \in U,  i \neq j\}$}
	\State Create $G_R = (U,A)$, with $ij \in A \Leftrightarrow d(i,j)<\frac{R}{n-k+1}$. \label{line:naive_create_G}
	\State $M \define$ \text{Maximum antichain of} $G_R$.
	\If{$|M| \geq k$,} $S_R \leftarrow \text{any }k \text{ points from } M$.

	\EndIf
\EndFor
\Ensure the set $S_R$ with the largest $\diver{S_R}$ value. 
\end{algorithmic}
\end{algorithm}

\subsection{Naive approach based on antichains}
\label{sec:approxnk2}

We begin by describing the naive approach for approximating AMMD.
Assume for the moment that we know $R = R^*$, the optimal value for an AMMD instance.
Consider a digraph $G = (U, A)$, where $ij \in A$ if and only if $d(i, j) < R$.
Then an independent set, say $O$, in $G$ of size $k$ will have $\diver{O} = R^*$.
Unfortunately, finding a maximum independent set in a graph is an \NP-hard problem with a weak approximation guarantee~\cite{hastad1996clique}.

Therefore, we lower the cutoff by setting it to $\frac{R}{n-k+1}$. This makes the underlying graph
so sparse that we can guarantee that the graph contains an antichain, say $S$, of size $k$.
Since an antichain is also an independent set, we know that $\diver{S} \geq \frac{R}{n-k+1}$.
We can find the antichain in polynomial time. Finally, we do not know $R^*$ but we know that it
is one of the distances. Therefore, we test every distance; there are at most $n(n - 1)$ of such distances.
The pseudo-code for the algorithm is given in Algorithm~\ref{algo:app_ma}.

\begin{theorem}
\label{thm:naivemaxanti}
Algorithm~\ref{algo:app_ma} is an $\frac{1}{n-k+1}$-approximation for AMMD in $\bigO(n^{4 + o(1)} \log n)$ time with high probability.
\end{theorem}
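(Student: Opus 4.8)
The plan is to prove two things: correctness (the algorithm returns a set $S_R$ with $\diver{S_R} \geq \frac{1}{n-k+1} R^*$) and the running time. The key structural claim is that the graph $G_R$ built in line~\ref{line:naive_create_G} always contains an antichain of size at least $k$ whenever $R \leq R^*$, so that the branch setting $S_R$ is reached.

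\textbf{Correctness.} First I would argue the central packing claim: for any threshold $R \leq R^*$, the digraph $G_R$, defined by $ij \in A \iff d(i,j) < \frac{R}{n-k+1}$, has a maximum antichain of size at least $k$. The natural route is via Dilworth-type reasoning (or, equivalently, via a chain-cover / Mirsky argument on $\TC(G_R)$). Suppose toward a contradiction that the largest antichain has size at most $k-1$. Then by Dilworth's theorem the vertex set $U$ (of size $n$) can be covered by at most $k-1$ chains in $\TC(G_R)$, so by pigeonhole some chain contains at least $\ceil{n/(k-1)} \geq \frac{n-k+1}{?}$ vertices; I would instead use the cleaner bound that a cover by $k-1$ chains of $n$ vertices forces one chain of length at least $n-k+2$, giving a directed path $v_0, v_1, \ldots, v_{\ell}$ in $\TC(G_R)$ with $\ell \geq n-k+1$ edges. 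Along each edge $v_{t}v_{t+1}$ of this path (an edge in the transitive closure, hence realized by a path in $G_R$, every arc of which has length $< \frac{R}{n-k+1}$), the directed triangle inequality gives $d(v_0, v_\ell) \leq \sum$ of the constituent arc lengths. Summing over the whole path, every ordered pair $(v_s, v_t)$ with $s < t$ satisfies $d(v_s, v_t) < (n-k+1)\cdot \frac{R}{n-k+1} = R \leq R^*$ by telescoping the triangle inequality. Thus among these $\ell + 1 \geq n-k+2 > k$ vertices, any $k$ of them form a set whose every forward distance is $< R^*$, contradicting the optimality of $R^*$ (more carefully: it would mean no $k$-subset of them can be the optimum, but I must phrase the contradiction so that it rules out a size-$(k-1)$ chain cover globally). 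The cleanest contradiction is: if the whole of $U$ were coverable by $k-1$ chains, then picking one vertex from each chain would be the only way to get an antichain, bounding $\diver{}$; instead I would directly show that a long chain produces a set of $k$ points with all pairwise forward distances below $R^*$, which contradicts that $R^*$ is achievable — so the antichain number is $\geq k$. Once $|M| \geq k$ is established for $R = R^*$ (which is one of the tested distances since $R^* > 0$ is a realized distance), the antichain $M$ is an independent set in $G_R$, meaning every ordered pair $u,v \in S_R$ has $d(u,v) \geq \frac{R^*}{n-k+1}$, hence $\diver{S_R} \geq \frac{R^*}{n-k+1}$, which is the claimed $\frac{1}{n-k+1}$-approximation.

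\textbf{Running time.} The outer loop ranges over the set of positive distances $\{d(i,j) > 0\}$, of which there are at most $n(n-1) = \bigO(n^2)$. For each candidate $R$, constructing $G_R$ costs $\bigO(n^2)$ (examine every ordered pair), and $G_R$ has $\abs{V} = n$ and $\abs{E} = m = \bigO(n^2)$ edges. Computing a maximum antichain of each $G_R$ is the dominant per-iteration cost. I would invoke Proposition~\ref{prop:multima} with $s = \bigO(n^2)$, $\abs{V_i} \leq n$, and $\abs{E_i} \leq m = \bigO(n^2)$, which yields total time $\bigO(s\, m^{1+o(1)} \log n \log s) = \bigO(n^2 \cdot (n^2)^{1+o(1)} \log n \log(n^2)) = \bigO(n^{4+o(1)} \log n)$ with high probability, absorbing the $\log s = \bigO(\log n)$ factor and the $n^{o(1)}$ slack into the $o(1)$ exponent. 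This matches the stated bound.

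\textbf{Main obstacle.} The technical heart is the packing claim that $G_R$ contains an antichain of size $\geq k$ for $R \leq R^*$, and in particular making the telescoping-via-triangle-inequality argument rigorous: I must be careful that a transitive-closure edge corresponds to a genuine $G_R$-path whose arcs each have length $< \frac{R}{n-k+1}$, and that the number of arcs along the extracted chain is correctly bounded by $n-k+1$ so the accumulated distance stays below $R$. The subtlety is bounding the chain length from the chain-cover size and then correctly counting arcs (a chain on $p$ vertices gives paths of up to $p-1$ arcs), so I would track constants carefully to ensure the factor is exactly $\frac{1}{n-k+1}$ and not off by one. The probabilistic caveat from Proposition~\ref{prop:multima} (union bound over all $s$ max-flow solves) is already handled by the proposition's statement, so it requires only a citation rather than new work.
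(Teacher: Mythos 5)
There is a genuine gap in the correctness half of your argument. You correctly identify the structural claim to prove (for $R \leq R^*$ the graph $G_R$ has an antichain of size at least $k$), but your route to it via Dilworth does not work as written. First, a cover of $n$ vertices by $k-1$ chains does \emph{not} force a chain with $n-k+2$ vertices; $n-k+2$ is an \emph{upper} bound on the largest chain in such a cover (the other $k-2$ chains each contribute at least one vertex), and the pigeonhole lower bound is only $\ceil{n/(k-1)}$, which is far too short for your telescoping step. Second, even granting a long chain in $\TC(G_R)$, your claim that every pair $(v_s,v_t)$ on it satisfies $d(v_s,v_t) < (n-k+1)\frac{R}{n-k+1}$ is unjustified: a transitive-closure edge is realized by a $G_R$-path that may have up to $n-1$ arcs, so the accumulated bound is $(n-1)\frac{R}{n-k+1}$, which is not below $R$ once $k>2$. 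Third, even if you produced $k$ points with all pairwise forward distances below $R^*$, that does not contradict anything — the optimum is attained by \emph{some} $k$-set, and the existence of a bad $k$-set elsewhere is consistent with that; you flag this issue yourself but never resolve it.

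The missing idea, which is how the paper argues, is to work directly with the optimal set $O$ rather than with an arbitrary long chain: show that any two $x,y \in O$ are mutually unreachable in $G_R$, so that $O$ itself is an antichain of size $k$. Suppose there were a path from $x$ to $y$ in $G_R$; truncate it at the \emph{first} vertex of $O$ encountered after $x$. The resulting path has all internal vertices outside $O$, hence at most $n-k$ of them and at most $n-k+1$ arcs, each of length less than $\frac{R}{n-k+1}$. The triangle inequality then gives $d(x,y') < R \leq R^*$ for some $y' \in O$ with $y' \neq x$, contradicting $\diver{O} = R^*$. This truncation step is exactly what makes the arc count match the denominator $n-k+1$; without it the factor cannot be recovered. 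Your running-time analysis is fine and in fact slightly more careful than the paper's (you invoke Proposition~\ref{prop:multima} with $s = \bigO(n^2)$ to handle the union bound over all maximum-antichain computations, whereas the paper cites Proposition~\ref{caceres} directly), and both yield the stated $\bigO(n^{4+o(1)}\log n)$ bound.
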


The proof is given in Appendix.

We finalize this section by observing that we could binary search for the largest $R$ value for which $G_R$ (defined in line~\ref{line:naive_create_G} in Algorithm~\ref{algo:app_ma}) still has an antichain of size $k$. First, sort the unique distances in time $\bigO(n^2 \log n)$, after which we need at most $\bigO(\log n)$ calls to find this $R$. Since for this $R$ we have $R \geq R^*$, we retain the same approximation guarantee. The binary search performs $\bigO(\log n)$ MA computations, and all of them need to succeed. Proposition~\ref{prop:multima} implies that the algorithm solves the problem in
$\bigO(n^{2 + o(1)}\log^2 n \log \log n)$ time with high probability.

\subsection{Refined approach: clustering and antichains}
\label{sec:approx16k}

The problem with Algorithm~\ref{algo:app_ma} is that it is using a very conservative cutoff of $\frac{R}{n-k+1}$, leading
to a weak guarantee. 
We show that we can relax this cutoff to $\frac{R}{6k}$, by first clustering the space $(U,d)$ according to the $\dmax$ distances.
The discovered cluster centers will have the property that two centers must have a large $\dmax$ distance between them.
A consequence of this property is that the resulting graph contains a large antichain, a large chordless cycle, or a large shortest path with no backward edges.
It turns out that we can search for all 3 subgraphs in polynomial time, and using those we can extract an independent set in polynomial time.

\ptitle{Clustering step}
Next, we describe the clustering step, as given in
Algorithm~\ref{algo:ballcover}. Here the algorithm greedily covers $U$ with a family of pairwise disjoint sets $\set{A_t}$. 
Each $A_t$ is constructed by selecting an unmarked point as a center $c_t$, and adding all unmarked points $v$ with $\dmax(c_t, v) < R$. Since all the vertices in $A_t$ are marked at the end of step $t$ (line~5), they cannot be selected by any $A_{t'}$ for $t' > t$. It follows that for every $t \neq t'$ we have $A_t \cap A_{t'} = \emptyset$. 

Algorithm~\ref{algo:ballcover} terminates in at most $n$ steps, since every set $A_t$ contains at least one point, namely the center point $c_t$. Algorithm~\ref{algo:ballcover} runs in $\bigO(n^2)$ time.

Each $A_t$ is constructed by selecting vertices that have a small $\dmax$ distance to their center $c_t$. Line~4 in Algorithm~\ref{algo:ballcover} ensures that the $\dmax$ distances between two distinct centers are at least $R$. 

To analyze this further, let us define $R'$ as the smallest distance that is at least one third of the optimum $R^*$,
\begin{equation}
\label{def:R'}
R' = \min\{d(u,v) \mid d(u,v) \geq R^*/3, \ u,v \in U, \ u\neq v\}.
\end{equation}
If we then perform the clustering for any $R \leq R'$, there will be $k$ centers that also have a pairwise $\dmin$ distance of at least $R^*/3$ between them. This is captured by Proposition~\ref{prop:cluster} and Corollary~\ref{cor:clusterphase}.

\begin{algorithm}[t]
\caption{$\algclust{U, d, R}$, clusters $U$ according to $\dmax$.}\label{algo:ballcover} 
\begin{algorithmic}[1]
\Require space $(U, d)$ and parameter $R > 0$.
\State Label all $u \in U$ as unmarked, let ${U}' \define \emptyset$ and  $t \leftarrow 1$.
\While {there exists an unmarked point}
	\State $c_t \define $ any unmarked point.
	\State $A_t \define \{\text{unmarked } v \in U \mid \dmax(c_t,v) < R \}.$ 
	\State Mark all $v \in A_t$. 
	\State ${U}' \define {U}' \cup \{c_t\}$ and $t \define t+1$.
\EndWhile
\Ensure ${U}'.$
\end{algorithmic}
\end{algorithm}

\begin{prop}
\label{prop:cluster}
Let $O$ be an optimal solution to AMMD with optimum $R^*$ and $R'$ as defined in Equation~\ref{def:R'}.
If $R \leq R'$, then the following two statements regarding Algorithm~\ref{algo:ballcover} are true.
\begin{itemize}
\item For all $t$ it holds that $|A_t \cap O| \leq 1$.
\item For any $t\neq t'$ for which $|A_t \cap O|=1$ and $|A_{t'} \cap O|=1$, it holds that $\dmin(c_t,c_{t'}) \geq R' \geq R^*/3$.
\end{itemize}
\end{prop}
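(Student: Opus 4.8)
The plan is to prove both bullet points directly from the directed triangle inequality, the optimality of $O$, and the definition of $R'$, with no appeal to the width machinery. The crucial preliminary observation I would establish first is that, when $R \le R'$, cluster membership forces small distances to the center in \emph{both} directions. Indeed, if $v \in A_t$ then line~4 guarantees $\dmax(c_t,v) < R \le R'$, so both $d(c_t,v) < R'$ and $d(v,c_t) < R'$. Since $R'$ is by definition the \emph{smallest} realized distance that is at least $R^*/3$, any distance strictly below $R'$ is strictly below $R^*/3$. Hence $d(c_t,v) < R^*/3$ and $d(v,c_t) < R^*/3$ simultaneously. This single fact drives both parts.

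For the first bullet I would argue by contradiction. Suppose some $A_t$ contains two distinct optimal points $o_1,o_2 \in O$. Applying the preliminary observation to each, all four directed distances between $c_t$ and $o_1,o_2$ are below $R^*/3$. The directed triangle inequality then gives $d(o_1,o_2) \le d(o_1,c_t) + d(c_t,o_2) < 2R^*/3 < R^*$. But $o_1,o_2$ are distinct members of the optimal set, so $d(o_1,o_2) \ge \diver{O} = R^*$, a contradiction; hence $\abs{A_t \cap O} \le 1$.

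For the second bullet, take $t \ne t'$ with $A_t \cap O = \{o\}$ and $A_{t'} \cap O = \{o'\}$; disjointness of the clusters forces $o \ne o'$, so both directed distances between $o$ and $o'$ are at least $R^*$. I would then route a path between the optimal points through the two centers: the directed triangle inequality gives $d(o,o') \le d(o,c_t) + d(c_t,c_{t'}) + d(c_{t'},o')$, and substituting the center bounds $d(o,c_t), d(c_{t'},o') < R^*/3$ yields $d(c_t,c_{t'}) > R^*/3$. Routing symmetrically through $c_{t'}$ then $c_t$ gives $d(c_{t'},c_t) > R^*/3$ as well.

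The final and most delicate step is upgrading these strict lower bounds of $R^*/3$ to the claimed value $R'$. Here I would invoke the definition of $R'$ once more, but in the opposite direction: each of $d(c_t,c_{t'})$ and $d(c_{t'},c_t)$ is an attained pairwise distance that is at least $R^*/3$, hence lies in the set $\{d(u,v) \mid d(u,v) \ge R^*/3\}$ whose minimum is exactly $R'$; therefore each is at least $R'$. Taking the minimum over the two directions gives $\dmin(c_t,c_{t'}) \ge R' \ge R^*/3$, as required. I expect this rounding argument to be the conceptual crux: defining $R'$ as the smallest realized distance above $R^*/3$ (rather than simply using $R^*/3$ itself) is precisely what lets the strict $> R^*/3$ estimate snap up to an \emph{attained} distance value $R'$, which is the bound the downstream analysis of the refined algorithm will actually need.
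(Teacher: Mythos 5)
Your proof is correct and follows essentially the same route as the paper's: both parts rest on the observation that any realized distance strictly below $R'$ is strictly below $R^*/3$, combined with the directed triangle inequality routed through the cluster centers. The only (immaterial) difference is in the last step of the second bullet, where you derive $d(c_t,c_{t'}) > R^*/3$ directly and then snap up to $R'$ by minimality, whereas the paper argues the contrapositive by assuming $d(c_t,c_{t'}) < R'$ and deriving the contradiction $d(x,y) < R^*$.
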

\begin{proof}
For the first statement, suppose that $A_t$ contains two distinct $x, y \in O, x \neq y$.
Then it holds that $d(x,y) \leq d(x,c_t)+d(c_t,y)$. Since $d(x,c_t)$ and $d(c_t,y)$ are strictly less than $R\leq R'$, they must be less than $R^*/3$ because $R'$ is defined as the smallest distance greater or equal to $R^*/3$, so any distance strictly smaller than $R'$ must be less than $R^*/3$. Thus, $d(x,y) < 2R^*/3 < R^*$, a contradiction since $x, y \in O, x \neq y$ implies that $d(x,y) \geq R^*$.

For the second statement, assume that $A_t$ contains $x \in O$ and $A_{t'}$ contains $y \in O$. Since $A_t$ and $A_{t'}$ are disjoint, we have $x \neq y$. Note that $d(x,y) \leq d(x,c_t)+d(c_t,c_{t'})+d(c_{t'},y)$, where $d(x,c_t)<R \leq R'$ and $d(c_{t'},y)<R \leq R'$. This means $d(x,c_t)$ and $d(c_{t'},y)$ must be less than $R^*/3$ by the definition of $R'$. So if $d(c_t,c_{t'}) < R'$, we would have $d(x,y) < 3 R^*/3 = R^*$, a contradiction. Similarly, we cannot have $d(c_{t'},c_t) < R'$ either, which means $\dmin(c_t,c_{t'}) \geq R' \geq R^*/3$.
\end{proof}

\begin{corollary}
\label{cor:clusterphase}
Let ${U}' = \algclust{U, d, R}$.
For every $u, v \in {U}', u \neq v$ we have $\dmax(u,v) \geq R$. Additionally, if $R \leq R'$, then ${U}'$ contains a set $S$, for which $|S| = k$ and $\diver{S} \geq R' \geq R^*/3$.

\end{corollary}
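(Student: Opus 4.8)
The plan is to handle the two assertions separately, since the first is immediate from the greedy structure of Algorithm~\ref{algo:ballcover}, while the second is essentially a repackaging of Proposition~\ref{prop:cluster} together with a counting argument.

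For the separation of centers, I would argue directly from the marking rule. Take two distinct centers $c_t, c_{t'} \in U'$ with $t < t'$. Because $c_{t'}$ is only selected (and thus marked) at the later step $t'$, it is still unmarked when step $t$ executes. Were $\dmax(c_t, c_{t'}) < R$, line~4 would include $c_{t'}$ in $A_t$ and line~5 would mark it, contradicting its availability as a center at step $t'$. Hence $\dmax(c_t, c_{t'}) \geq R$ for every pair of distinct centers, and since $\dmax$ is symmetric this covers all unordered pairs, giving the first claim.

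For the existence of $S$, I would combine a pigeonhole count with Proposition~\ref{prop:cluster}. The sets $\set{A_t}$ form a partition of $U$, so each of the $k$ points of an optimal set $O$ lies in exactly one cluster; by the first bullet of Proposition~\ref{prop:cluster} (which applies since $R \leq R'$) no cluster contains two of them. Therefore the $k$ points of $O$ occupy $k$ \emph{distinct} clusters, each meeting $O$ in exactly one point. Let $S$ be the set of the $k$ centers of precisely these clusters, so that $|S| = k$. For any two of them, the second bullet of Proposition~\ref{prop:cluster} yields $\dmin(c_t, c_{t'}) \geq R' \geq R^*/3$; since $\dmin$ is the minimum of the two directed distances, both $d(c_t,c_{t'})$ and $d(c_{t'},c_t)$ are at least $R'$. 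Taking the minimum over all ordered pairs in $S$ then gives $\diver{S} \geq R' \geq R^*/3$.

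The step requiring the most care is the last one, because the good set $S$ consists of the \emph{centers}, which need not themselves belong to $O$. Their pairwise separation is therefore not inherited from $O$ directly but is supplied entirely by the triangle-inequality bookkeeping already done in Proposition~\ref{prop:cluster}; the only genuinely new ingredient here is the observation that exactly $k$ clusters are hit by $O$, so that enough well-separated centers are available to form an admissible $S$.
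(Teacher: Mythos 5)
Your proof is correct and follows the same route the paper intends: the first claim from the marking rule of Algorithm~\ref{algo:ballcover} (a later center must have been unmarked, hence at $\dmax$-distance at least $R$, at every earlier step), and the second by combining the disjointness of the clusters with both bullets of Proposition~\ref{prop:cluster} to select the $k$ centers of the clusters hit by $O$. The paper leaves this corollary without an explicit proof, and your write-up supplies exactly the intended argument, including the correct observation that the guarantee transfers to the centers rather than the optimal points themselves.
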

 
Corollary~\ref{cor:clusterphase} states that as long as $R \leq R'$, from any instance space $(U, d)$ we can efficiently find a subset ${U}' \subseteq U$ 
such that ${U}'$ still contains $k$ points with a pairwise $\dmin$ distance of at least $R^*/3$ between them. This enables us to restrict ourselves to ${U}'$, at the expense of a decrease in the optimal value by a factor of three.

\ptitle{The \algbac{} algorithm}
We are ready to describe our algorithm which we call \algbac{} (shortened for ball-and-antichain). The pseudocode is given in Algorithms~\ref{algo:bac}--\ref{algo:extract}.
Similar to the naive approach we iterate over all distances. For each candidate distance $R$,
we cluster the space to get the centers $U'$.
We then construct a graph $G$ with edges corresponding to distances shorter than $\frac{R}{2k}$.
We can guarantee that there is ($i$) a large chordless cycle, ($ii$) a long shortest path with no backward edges,
or ($iii$) a large antichain. In the first two cases, we can obtain an independent set by selecting $k$ vertices with odd indices.
In the last case, it is enough to select $k$ vertices from the found antichain.

\begin{algorithm}[t]
\caption{\algbac{U, d, k}, an $\frac{1}{6k}$-approx. algorithm for AMMD.}\label{algo:bac} 
\begin{algorithmic}[1]
\Require space $(U, d)$ and integer parameter $k\geq 2$.
\ForAll {$R \in \{d(i,j)>0 \mid i, j \in U,  i \neq j\}$} \label{line:loop}
	\State ${U}' \define \algclust{U, d, R}$. 
	\State $\algextract{U', d, R/(2k), k}$.
\EndFor 

\Ensure the set returned by $\algextract{}$ with the largest $\diver{}$ value.
\end{algorithmic}
\end{algorithm}

\begin{algorithm}[t]
\caption{\algextract{U, d, \sigma, k}, subroutine for extracting a candidate set.}\label{algo:extract} 
\begin{algorithmic}[1]
\Require space $(U, d)$, threshold $\sigma$, and integer parameter $k\geq 2$.
    \State Create $G = (U,A)$, with $ij \in A \Leftrightarrow d(i,j)< \sigma$.\label{line:create_G}
    \If{$G$ contains a cycle}
        \State $C \define$ chordless cycle in $G$. \label{line:cycle}
    \EndIf
    \State $G_c \define$ the condensation of $G$.
    \State $M \define$ maximum antichain of $G_c$.
    \State $L \define$ shortest path of length $2k-1$ in $G_c$ or longest found. \label{line:path}
    \If{$C$ exists and $|C| \geq 2|M|-1$ and $|C| \geq |L|$}
        \State $I \define$ points with odd indices from $C$.
    \ElsIf{$2|M|-1 \geq |L|$} 
        \State $I \define$ points in $G$ corresponding to points in $M$.
    \Else{} 
        \State $I \define$ points in $G$ corr. to points with odd indices in $L$.
    \EndIf
\Ensure greedily selected $k$ points from the set $I$, if found.
\end{algorithmic}
\end{algorithm}

Next, we will prove the approximation guarantee. 
First, we need Lemma~\ref{lem:cycles}, which states that there cannot exist small cycles in $G$.

\begin{lemma}
\label{lem:cycles}
For any $R > 0$, any cycle $C$ in the digraph $G$ constructed in $\algextract{U', d, \frac{R}{2k}, k}$ (see Alg.~\ref{algo:extract})
has at least $2k + 2$ distinct vertices.
\end{lemma}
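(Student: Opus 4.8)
The plan is to derive a lower bound on the number of vertices of any directed cycle in $G$ by playing off two opposing facts against each other: the forward arcs along the cycle are \emph{short} (each of length strictly below $\sigma = \frac{R}{2k}$ by construction of $G$), whereas the reverse arc between two consecutive cycle vertices is forced to be \emph{long} (at least $R$), the latter coming directly from the separation property of the cluster centers guaranteed by Corollary~\ref{cor:clusterphase}.

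Concretely, I would fix a directed cycle $C = v_0 v_1 \cdots v_{\ell-1} v_0$ with $\ell$ distinct vertices (indices read modulo $\ell$), and recall that every arc $v_i v_{i+1}$ of $G$ satisfies $d(v_i, v_{i+1}) < \sigma = \frac{R}{2k}$. The first key step is to note that $v_i, v_{i+1} \in {U}'$ are two \emph{distinct} cluster centers, so Corollary~\ref{cor:clusterphase} gives $\dmax(v_i, v_{i+1}) \geq R$. Since the forward distance $d(v_i, v_{i+1})$ is strictly less than $\sigma < R$ (using $k \geq 2$), the maximum must be realized in the reverse direction, yielding $d(v_{i+1}, v_i) \geq R$ for every $i$. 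The second key step is to bound this same reverse distance from above by routing the \emph{long way} around the cycle: applying the directed triangle inequality along the forward path $v_{i+1} \to v_{i+2} \to \cdots \to v_i$, which consists of exactly $\ell - 1$ short arcs, gives $d(v_{i+1}, v_i) \leq \sum_j d(v_j, v_{j+1}) < (\ell - 1)\sigma = (\ell-1)\frac{R}{2k}$, where the sum ranges over those $\ell-1$ arcs.

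Chaining the two bounds produces $R \leq d(v_{i+1}, v_i) < (\ell-1)\frac{R}{2k}$, and dividing through by $R > 0$ leaves $2k < \ell - 1$, i.e. $\ell \geq 2k + 2$ since $\ell$ is an integer, which is exactly the claim. I expect the only real obstacle to be conceptual rather than computational: recognizing that a short forward arc forces a long reverse arc, and that this long reverse arc can in turn be re-expressed as a sum of short forward arcs traversed the long way around the cycle. Once this pairing is identified, the remaining estimate is a single line of arithmetic. The only subsidiary point worth checking explicitly is that $\sigma < R$, so that the reverse orientation is indeed the one attaining $\dmax$; this holds because $k \geq 2$.
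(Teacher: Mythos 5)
Your proof is correct and follows essentially the same route as the paper: a short cycle arc forces the reverse distance between its endpoints to be at least $R$ (via the $\dmax$ separation from Corollary~\ref{cor:clusterphase}), while the directed triangle inequality around the remaining $\ell-1$ short arcs bounds that same reverse distance by $(\ell-1)\frac{R}{2k}$, giving $\ell \geq 2k+2$. The only cosmetic difference is that the paper applies this to the specific closing arc $(v_\ell, v_1)$ rather than a generic arc of the cycle.
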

\begin{proof}
Suppose $G$ contains a cycle $C = (v_1,\ldots,v_{\ell},v_{1})$ of length $\ell$. 
Since $C$ is a cycle in $G$, it holds that $d(v_{\ell},v_1) < \frac{R}{2k}$, by definition of $G$.
On the other hand, as ${U}'$ is the output of \algclust{}, Corollary~\ref{cor:clusterphase} states that $\dmax(v_1,v_{\ell}) \geq R$. This implies that $d(v_1,v_{\ell}) \geq R$. Now the triangle inequality for $d(v_1,v_{\ell})$ along the edges of cycle $C$ implies
\[
	R \leq d(v_1,v_{\ell}) \leq \sum_{i = 1}^{\ell - 1} d(v_i, v_{i + 1}) < (\ell - 1) \frac{R}{2k}. 
\]
Solving for $\ell$ leads to $\ell > 2k + 1$, which proves the claim.
\end{proof}

\begin{theorem}
\label{thm:approx}
\algbac{U, d, k} is an $\frac{1}{6k}$-approximation to AMMD. 
\end{theorem}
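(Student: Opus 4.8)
The plan is to exhibit a single iteration of the main loop that already certifies the bound; since \algbac{} returns the best set over all candidate distances $R$, it suffices to analyze the iteration $R = R'$, with $R'$ as in Equation~\ref{def:R'}. Note that $R'$ is itself a realized distance and that $R^*/3 \le R' \le R^*$, the upper bound holding because $R^*$ is a distance that is $\ge R^*/3$; hence $R'$ occurs in the loop and the clustering guarantees of Corollary~\ref{cor:clusterphase} apply with $R = R'$. For this $R$ the threshold is $\sigma = R'/(2k) \ge R^*/(6k)$, and any set with no edge of $G$ among its members (in either direction) has all pairwise distances $\ge \sigma$. Thus the theorem reduces to one claim: \emph{the call $\algextract{U', d, R'/(2k), k}$ produces an independent set $I$ of $G$ with $\abs{I}\ge k$.} Given this, the $k$ returned points have diversity $\ge \sigma \ge R^*/(6k)$, so the loop maximum is at least this large.

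First I would verify that each of the three extraction routes really yields an independent set of $G$. For a chordless cycle $C$ the only edges among its vertices are consecutive, so its odd-indexed vertices are pairwise non-adjacent. For the maximum antichain $M$ of the condensation $G_c$, two incomparable strongly connected components have no edge of $G_c$ in either direction, hence, by definition of the condensation, no edge of $G$ between any of their vertices; taking one representative per component gives $\abs{M}$ pairwise non-adjacent vertices. For the path route the point is that $L$ is a shortest path of $G_c$: a forward chord would shorten it and a backward edge would contradict acyclicity of $G_c$, so again only consecutive vertices are adjacent and the odd-indexed components have independent representatives in $G$. The three routes extract $\ceil{\abs{C}/2}$, $\abs{M}$, and $\ceil{\abs{L}/2}$ vertices respectively.

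Next I would prove the size dichotomy, splitting on whether $G$ has a cycle. If it does, it has a chordless one, which by Lemma~\ref{lem:cycles} has at least $2k+2$ vertices; a short check of the branch conditions (using that $L$ is capped at $2k-1$ vertices, so $\abs{L} < \abs{C}$) shows that whichever branch is taken outputs at least $k$ vertices. If $G$ is acyclic, I invoke Corollary~\ref{cor:clusterphase} to get $S \subseteq U'$ with $\abs{S}=k$ and $\diver{S} \ge R' = R$, so every ordered pair in $S$ has distance $\ge R$. If $S$ is an antichain then $\abs{M}\ge k$ and the antichain branch succeeds. Otherwise some $x,y \in S$ has $x$ reaching $y$, and here is the crucial estimate: any $x$--$y$ path with $\ell$ edges satisfies, by the directed triangle inequality, $R \le d(x,y) < \ell\,\sigma = \ell R/(2k)$, forcing $\ell > 2k$. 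Hence even the shortest $x$--$y$ path has more than $2k$ edges, so a geodesic with exactly $2k-1$ vertices exists (truncate the shortest path), giving $\abs{L}=2k-1$ and a path route that extracts $k$ independent vertices. A final inspection of the branch conditions confirms that in the acyclic case the chosen branch again outputs at least $k$ vertices.

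I expect the main obstacle to be this acyclic dichotomy, and in particular explaining why a missing large antichain forces a long shortest path. The decisive ingredient is the path-length estimate above, which converts the separation $\diver{S} \ge R$ together with the fine threshold $\sigma = R/(2k)$ into a guaranteed geodesic of length exceeding $2k$; this is exactly where the factor $2k$ (and hence, after the factor $3$ lost in clustering, the $\tfrac{1}{6k}$ guarantee) originates. The remaining work — checking that each of the three routes is genuinely an independent set and the bookkeeping over \algextract{}'s branching so that the \emph{selected} structure, not merely some structure, is large enough — is routine but must be handled carefully, especially the condensation/representative argument and the fact that it is precisely $L$ being a shortest path that makes its odd-indexed vertices independent.
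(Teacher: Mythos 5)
Your proposal is correct and follows essentially the same route as the paper's proof: analyze the iteration $R = R'$, use Corollary~\ref{cor:clusterphase} to get a well-separated $k$-set among the centers, and split on whether $G$ is acyclic, invoking Lemma~\ref{lem:cycles} for the chordless-cycle case and the triangle-inequality path-length estimate to force a shortest path on at least $2k+2$ vertices otherwise. Your treatment is in fact slightly more careful than the paper's on two points it leaves implicit — that $R' \le R^*$ and that the branch actually \emph{selected} by \algextract{} (not merely some existing structure) yields at least $k$ independent vertices.
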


\begin{proof}
Line~\ref{line:loop} in Algorithm~\ref{algo:bac} iterates over all unique distances, and one of them is equal to $R'$ as defined in Eq.~\ref{def:R'}.
We will show that for $R \leq R'$, the digraph $G$, constructed in line~\ref{line:create_G} in Algorithm~\ref{algo:extract}, has an antichain $M$ of size $|M| \geq k$, or there exists either a shortest path
with no backward edges or a chordless cycle from which we can select $k$ independent vertices. An independent set $I$ of size $\abs{I} \geq k$ in graph $G$ then yields a solution with a diversity score of $\diver{I} \geq \frac{R}{2k}$, which for $R = R'$ is $\diver{I} \geq\frac{R'}{2k} \geq \frac{R^*}{6k}$ proving the theorem. 

Note that since the nodes in an antichain have no paths connecting them, it suffices to look for an antichain in the condensation $G_c$, whose vertices are the strongly connected components of $G$. If there exists an antichain $M$ of size $|M| \geq k$, we are done as the nodes in an antichain are independent. 

Consider the case where the maximum antichain $M$ has size $|M|<k$ and assume $G$ is a DAG. Note that when $G$ is a DAG the condensation $G_c$ is equivalent to $G$.

Corollary~\ref{cor:clusterphase} states that if $R \leq R'$ then ${U}'$ contains a subset
$S \subseteq {U}'$ for which $|S| = k$ and $\diver{S} \geq R'$. Then there must be a path in $G$ between some pair of distinct points in $S$. Otherwise, $S$ is an antichain of size $k$.

Let this pair of nodes be $x, y \in S, x \neq y$, with a path $(x = v_1, \ldots, v_\ell = y)$ from $x$ to $y$ in $G$.
Then the triangle inequality implies
\[
	R' \leq \diver{S} \leq d(x, y) \leq \sum_{i=1}^{\ell-1} d(v_i, v_{i + 1}) < (\ell - 1) \frac{R}{2k} \leq (\ell - 1) \frac{R'}{2k}.
\]
Therefore, $\ell$ must be at least $2k + 2$ meaning any path between $x$ and $y$ must have at least $2k + 2$ vertices. Hence, there is a shortest path of length $2k + 2$ while a shortest path $L$ of length $2k - 1$ is sufficient.
There cannot be any shortcut edges in $L$, since $L$ is a shortest path. Nor can there be any backward edges in $L$, since $G$ is a DAG.
Consequently, elements in $L$ with odd indices form an independent set $I$ of size $k$.

Finally, assume that $G$ is not a DAG. Then
there is a chordless cycle $C$. Lemma~\ref{lem:cycles} guarantees that $C$ has at least $2k + 2$ elements.
Then, elements in $C$ with odd indices form an independent set $I$ of size $\abs{I} \geq k+1 > k$.
\end{proof}

\ptitle{Time complexity of Algorithm~\ref{algo:bac}}
The iteration in line~\ref{line:loop} is over at most $\bigO(n^2)$ possible $R$
values.
Both detecting a cycle in $G$ and extracting the chordless
cycle from it (line~\ref{line:cycle} of Algorithm~\ref{algo:extract}) take $\bigO(n^2)$ time. Computing the
maximum antichain can be done in $\bigO(n^{2 + o(1)} \log n)$ time
(Proposition~\ref{caceres}).

To compute the shortest path we can use the following approach:
Let $D = A + I$, where $A$ is the adjacency matrix of $G$,
and $I$ is the identity matrix. Then $D^\ell_{ij} > 0$ if and only if
there is a path of at most length $\ell$ from $i$ to $j$.
Consequently, there is a shortest path from $i$ to $j$ of length $2k - 1$ if and only if
$D^{2k-1}_{ij} > 0$ and $D^{2k-2}_{ij} = 0$. We can compute the necessary matrices
in $\bigO(n^{\omega}\log k)$ time, where $\omega < 2.373$ is the matrix multiplication exponent~\cite{alman2021refined}. Once $i$ is found, we use Dijkstra's algorithm to recover the path in $\bigO(n^2)$ time.

Overall we have a worst-case time complexity of $\bigO(n^{2+\omega} \log k)$.

Note that in practice, we do not use the matrix multiplication method. Instead,
we compute a shortest path tree from every node. This leads to a slower theoretical time but the algorithms
are still practical as demonstrated in the experiments.

\subsection{Practical improvements}
\label{sec:speeding}
We discuss several modifications, which speed up \algbac{}, and/or might improve the solution quality in practice. 

\ptitle{Algorithm \algbacb{}} The first modification to \algbac{}, given in Algorithm~\ref{algo:bacb}, is
aimed at improving solution quality, at the expense of a slightly larger
running time. We will call this modified algorithm \algbacb{}. Note that the $\frac{1}{6k}$-approximation guarantee comes from the fact
that we add an edge $ij$ to $G$ whenever $d(i, j) < \frac{R}{2k}$. If we can increase
this cutoff to, say $R \times \alpha$, \emph{and} still find a feasible set, then the found set $S$ 
is guaranteed to have $\diver{S} \geq R\alpha$, that is we will obtain an $\alpha$-approximation.

\begin{algorithm}[t]
\caption{\algbacb{U, d, k}, an $\frac{1}{6k}$-approx. algorithm for AMMD.}\label{algo:bacb} 
\begin{algorithmic}[1]
\Require space $(U, d)$ and integer parameter $k\geq 2$.
\State $R_1 < \ldots < R_{m} \define $ all unique positive distances sorted.
\For{every $i = 1, \ldots, m$} 
	\State ${U}' \define \algclust{U, d, R_i}$. 
	\If {$\algextract{U', d, R_i /(2k), k}$ exists}
		\State $a \define \min\set{s \mid R_i / (2k) < R_s}$, $b \define i$.
		\While {$a \leq b$}
			\State $t \define \floor{\frac{a+b}{2}}$.
			\If {$\algextract{U', d, R_t, k}$ exists}
				$a \define t + 1$
			\Else{}
				$b \define t - 1$
			\EndIf
		\EndWhile
	\EndIf
\EndFor 

\Ensure the set returned by $\algextract{}$ with the largest $\diver{}$ value.
\end{algorithmic}
\end{algorithm}

For every $R$ in the iteration of \algbacb{}
that gives a feasible solution for the threshold $\frac{R}{2k}$, we will try to
improve the solution value by searching for a cutoff value larger than
$\frac{R}{2k}$ when constructing the graph $G$ (line~\ref{line:create_G} of Algorithm~\ref{algo:extract}).
If \algbacb{} is unable to find a feasible solution for a
certain $R$, then we continue
iterating to the next $R$. 

To this end, if \algbacb{} has found a feasible set for some $R$, we use the binary search
to search for larger cutoffs in $[\frac{R}{2k},R]$.
Note that only when we use the cutoff $
\frac{R}{6k}$ are we theoretically guaranteed that we can extract $k$
independent points. Nonetheless, \algbacb{} will attempt to do this for larger
cutoffs as well. The binary search requires $\bigO(\log n)$ tests for
a single $R$ since we can assume that the cutoff is one of the distances. Hence,
the computational complexity of \algbacb{} is in $\bigO(n^{2+\omega} \log k \log
n)$.

\ptitle{Algorithm \algbacf{}} This algorithm speeds up \algbac{} while at the same time attempting to improve solution quality. Similarly to Section~\ref{sec:approxnk2}, we can replace the loop of Algorithm~\ref{algo:bac} (line~\ref{line:loop}) with a binary search in an attempt to find a maximal $R$ for which we find a feasible solution. This reduces the iterations from $\bigO(n^2)$ to $\bigO(\log n)$.

\begin{algorithm}[t]
\caption{\algbacf{U, d, k}, an $\frac{1}{6k}$-approx. algorithm for AMMD.}\label{algo:bacf} 
\begin{algorithmic}[1]
\Require space $(U, d)$ and integer parameter $k\geq 2$.
\State $R_1 < \ldots < R_{m} \define $ all unique positive distances sorted.
\State $a \define 1$, $b \define m$
\While {$a \leq b$}
    \State $t \define \floor{\frac{a+b}{2}}$.
	\State ${U}' \define \algclust{U, d, R_t}$. 
	\If {$\algextract{U', d, R_t / (2k), k}$ exists}
		$a \define t + 1$
	\Else{}
		$b \define t - 1$
	\EndIf
\EndWhile
\State $i \define a-1$.
\State ${U}' \define \algclust{U, d, R_i}$. 

\State $a \define \min\set{s \mid R_i / (2k) < R_s}$, $b \define i$.
\While {$a \leq b$}
    \State $t \define \floor{\frac{a+b}{2}}$.
	\If {$\algextract{U', d, R_t, k}$ exists}
		$a \define t + 1$
	\Else{}
		$b \define t - 1$
	\EndIf
\EndWhile

\Ensure the set returned by $\algextract{}$ with the largest $\diver{}$ value.
\end{algorithmic}
\end{algorithm}

Note that unlike in Section~\ref{sec:approxnk2}, this binary search might not find the globally largest $R$ value for which we can extract a feasible solution.
This is because there may be distance values $R_i > R_j > R'$ such that $R_i$ yields a feasible solution while $R_j$ does not.

However, we can still sort the
unique distances and use binary search to find
$R_{j}$ with a feasible solution, say $S_j$, such that the next value
$R_{j+1} > R_{j}$ does not yield a feasible solution.

Moreover, any $R \leq R'$ yields a feasible solution. This implies $R_{j} \geq R'$ and we get the same guarantee as \algbac{}, because
\[
	\diver{S_{j}} \geq \frac{R_j}{2k} \geq \frac{R'}{2k} \geq \frac{R^*}{6k}.
\]

Similarly to \algbacb{}, \algbacf{} then attempts to improve the cutoff
value for constructing $G$ by again binary searching for an improved cutoff
in the interval $[\frac{R_j}{2k},R_j]$ for which \algextract{} still finds a feasible
solution.
In the worst case, this adds
another $\bigO(\log n)$ iterations, which does not change the asymptotic
running time of the algorithm.
The running time for solving $\bigO(\log n)$ MA instances, as given by Proposition~\ref{prop:multima},
is dominated by the time needed to check for shortest paths of length $2k-1$.
In summary, the running time of \algbacf{} is
\[
	\bigO(n^2 \log n + 2 n^{\omega} \log n \log k) = \bigO(n^{\omega} \log n \log k).
\]

This is a considerable improvement over the running time of
\algbac{}, making \algbacf{} orders of magnitude faster while retaining the same theoretical guarantees.

\ptitle{Further improvements}
As the graph $G$ constructed in Algorithm~\ref{algo:extract} may be split into multiple disconnected components, we improve the search for independent sets by looking for the cycles, antichains, and long shortest paths in each weakly connected component of $G$ separately. We then take the union of the independent sets for each component, aiming to have $k$ points in total.

In addition, rather than choosing the centers arbitrarily in Algorithm~\ref{algo:ballcover}, we start by picking one of the vertices with the largest $\dmin$ distance. This heuristic is similar to the approach for solving MMD~\citep{ravi1994heuristic}. For the subsequent iterations, we choose the point furthest from the current set of chosen centers, as in Algorithm~\ref{algo:greedydmin}.

Finally, in practice many of the unique distances $R_1, \ldots, R_m$ may result in the same clustering ${U}'$ in Algorithms~\ref{algo:bac} and~\ref{algo:bacb}. We avoid these duplicate computations by grouping the $R$ values that yield the same clustering.

\section{Experiments}
\label{sec:exps}


\begin{table}[t]
  \caption{Characteristics of the datasets: size, smallest and largest distances, and number of unique distances. The total number of pairwise distances is $|U|(|U|-1)$.}
  \label{tab:datasets}
  \begin{tabular}{@{}lrrrrr@{}}
    \toprule
    Data  & type & $|U|$ & $\min d$ & $\max d$ & dists \\
    \midrule
    \emph{ft70} \cite{tsplib}  & infra & 70 & 331 & 2\,588 & 1\,441 \\
    \emph{kro124} \cite{tsplib} & infra & 100 & 81 & 4\,309 & 3\,297 \\
    \emph{celegans} \cite{kunegis2013konect} & bio & 297 & 1 & 24 & 24 \\
    \emph{rbg323} \cite{tsplib} & infra & 323 & 0 & 21 & 23 \\
    \emph{wiki-vote} \cite{snapnets} & social & 1\,300 & 1 & 9 & 9 \\
    \emph{US airports} \cite{toreopsal,kunegis2013konect} & flight & 1\,402 & 1 & 169\,685 & 27\,237 \\
    \emph{moreno} \cite{moody2001peer,kunegis2013konect} & social & 2\,155 & 1 & 52 & 52 \\
    \emph{openflights} \cite{toreopsal,kunegis2013konect} & flight & 2\,868 & 1 & 17 & 17 \\
    \emph{cora} \cite{vsubelj2013model,kunegis2013konect} & citation & 3\,991 & 1 & 45 & 45 \\
    \emph{bitcoin} \cite{kumar2016edge,kumar2018rev2} & trust & 4\,709 & 0 & 134 & 130 \\
    \emph{gnutella} \cite{snapnets} & P2P & 5\,153 & 1 & 21 & 21 \\
  \bottomrule
\end{tabular}
\end{table}

\begin{table}[] \centering
\caption{Performance comparison between our approximation algorithms (\textbf{\algbac{}}, \textbf{\algbacb{}}, and \textbf{\algbacf{}}), the greedy algorithm (\textbf{Greedy}), and the random baseline (\textbf{Rand.}), relative to the optimum solution value (Opt.) for fixed parameter value $k=10$.}
\label{tab:performsmallk}
\tcbset{colframe=black!5, colback=black!5, size=fbox, on line}
\setlength{\tabcolsep}{0pt}
\begin{tabular*}{\columnwidth}{@{\extracolsep{\fill}}lrrrrrr@{}}
\toprule
Data & \algbac{} & \algbacb{} & \algbacf{} & Greedy & Rand. & Opt. \\
\midrule
\emph{ft70} & 95\% & \tcbox{98\%} & 95\% & 95\% & 63\% & 786 \\
\emph{kro124p} & 88\% & \tcbox{89\%} & 88\% & 88\% & 45\% & 1 136 \\
\emph{celegans} & \tcbox{100\%} & \tcbox{100\%} & \tcbox{100\%} & \tcbox{100\%} & 29\% & 7 \\
\emph{rbg323} & 80\% & \tcbox{100\%} & 80\% & 80\% & 0\% & 15 \\
\emph{wiki-vote} & \tcbox{80\%} & \tcbox{80\%} & \tcbox{80\%} & \tcbox{80\%} & 40\% & 5 \\
\emph{US airports} & \tcbox{100\%} & \tcbox{100\%} & \tcbox{100\%} & \tcbox{100\%} & 0\% & 64 036 \\
\emph{moreno} & \tcbox{100\%} & \tcbox{100\%} & \tcbox{100\%} & \tcbox{100\%} & 27\% & 22 \\
\emph{openflights} & \tcbox{100\%} & \tcbox{100\%} & \tcbox{100\%} & \tcbox{100\%} & 33\% & 9 \\
\emph{cora} & \tcbox{100\%} & \tcbox{100\%} & \tcbox{100\%} & \tcbox{100\%} & 19\% & 27 \\
\emph{bitcoin} & \tcbox{100\%} & \tcbox{100\%} & \tcbox{100\%} & \tcbox{100\%} & 29\% & 77 \\
\emph{gnutella} & 85\% & \tcbox{92\%} & 85\% & 85\% & 38\% & 13 \\
\midrule
Average & 93.4\% & \tcbox{96.3\%} & 93.4\% & 93.4\% & 29.3\% & 100\% \\
\bottomrule
\end{tabular*}
\end{table}

\begin{figure*}[th!]

\centering
\subfloat[\emph{ft70} dataset]{\includegraphics{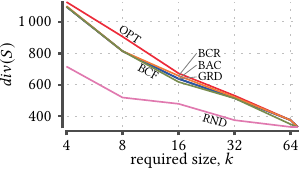}}
\subfloat[\emph{kro124} dataset]{\includegraphics{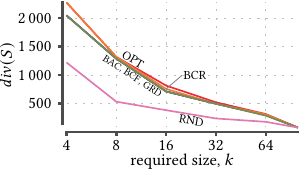}}
\subfloat[\emph{rbg323} dataset]{\includegraphics{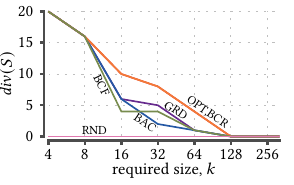}}
\caption{The y-axis shows the diversity score of the solutions provided by the algorithms. The x-axis shows the parameter $k$, which is the required solution size. All solutions converge to the same set as $k$ approaches $n$.  \label{fig:experimentsk}}
\end{figure*}

Next we describe our experiments.
All experiments were performed on an Intel\,Core\,i5-8265U processor at~1.6 GHz with 16\,GB\,RAM. 
Our methods were implemented in Python~3.8 and are publicly available.\footnote{\url{https://version.helsinki.fi/dacs/ammd}}

\subsection{Setup}

\ptitle{Data}
For data, we used weighted digraphs that we converted into an asymmetric distance space by computing the metric closure, that is, we define the distance $d(u, v)$ to be 
the weighted shortest path length between $u$ and $v$. We only used the largest strongly connected component.
Table~\ref{tab:datasets} shows the data and several statistics used for evaluation.
The column $\min d$ (resp. $\max d$) denotes the smallest (resp.
largest) distance present. The last column shows the
number of unique distances, which heavily
influences the running time of our algorithms from Section~\ref{sec:approx}.
The datasets \emph{ft70}, \emph{kro124} and \emph{rgb323} are asymmetric traveling salesman instances from the public library TSPLIB \cite{tsplib}. The \emph{bitcoin} dataset originally has edge weights between $[-10,10]$, which we have rescaled to nonnegative weights between $[0,20]$.

\ptitle{Baselines} Besides our proposed algorithms \algbac{},
\algbacb{}, and \algbacf{}, we have also implemented
and tested the following baselines: 

\pttitle{Greedy} Iteratively pick vertices maximizing the $\dmin$ distance towards the already chosen set of vertices as in Algorithm~\ref{algo:greedydmin}. To improve the performance, we start with a vertex of an edge with the maximum $\dmin$ distance.

\pttitle{Random} Select a subset $S \subseteq U$ of size $|S|=k$ uniformly at random. Repeat 10 times and return the set with the highest score $\diver{S}$.

\pttitle{Optimal} Computes an optimal solution by reducing it to solving multiple $k$-clique problems. First, sort all the unique weights in the AMMD instance.
The optimum $R^*$ is equal to one of the unique weights.
Let $R$ be one of the unique weights. Create an undirected graph $G_R = (U, E)$ with the same nodes as our AMMD instance, and $ij \in E$ if and only if both $d(i,j) \geq R$ and $d(j, i) \geq R$.
For every $R \leq R^*$ the graph $G_R$ will have a clique of size $k$, and for every $R>R^*$ the graph $G_R$ does not contain a clique of size $k$, by the optimality of $R^*$.
A binary search on the sorted list of unique weights finds the optimum $R^*$, by solving at most $\bigO(\log n)$ $k$-clique problems.
Note that the $k$-clique problem can be solved optimally in $\bigO(n^{\omega k/3})$ time \cite{nevsetvril1985complexity}. 
A similar idea has been used to optimally solve symmetric MMD and related fairness variants \cite{akagi2018exact,wang2023max}.

\subsection{The diversity scores of the returned sets}
\label{exp:per}

\ptitle{Small $k$ regime}
As mentioned in the introduction, in a typical AMMD setting one is interested in finding solutions of small size.
We compare the performance of our algorithms \algbac{}, \algbacb{} and \algbacf{} in Table~\ref{tab:performsmallk} with the aforementioned baselines Greedy, Random, and Optimal on all the datasets from Table~\ref{tab:datasets}, for the choice of $k=10$. 

Observe that \algbacb{} achieves the highest score in every experiment and finds the optimal value on seven datasets. In addition, \algbacf{} achieves the same solution quality as \algbac{} and Greedy while being significantly faster than \algbac{} and having the approximation guarantee discussed in Section~\ref{sec:speeding}. We also note that Random often performs very poorly, even with an increased number of repetitions. 

\algbacf{}, Greedy, and Random were very fast and took less than three seconds to run, whereas \algbacb{} took several minutes on some datasets. On \emph{wiki-vote} and \emph{gnutella} datasets, our baseline Optimal exceeded a time limit of two hours, but we were able to find the optimal values by manually checking that no cliques of size $k$ exist when only edges with higher weight are considered.  The running times on the real-world datasets are shown in Table~\ref{tab:running_times} in the Appendix.

\ptitle{Performance for any $k$}
We were able to run all the algorithms for all values of $k$ on the three small traveling salesman instances of Table~\ref{tab:datasets} within a reasonable time.
Figure~\ref{fig:experimentsk} shows the results, which are consistent with our previous findings. \algbacb{} performs very well also for larger values of $k$ while \algbac{}, \algbacf{}, and Greedy are often slightly worse. Note that as $k$ increases, the output of all algorithms converges to the same set, which is the entire universe of points.

\subsection{Running times of the algorithms}
\label{exp:rt}
\begin{figure}[ht!]
\centering
\subfloat[Synthetic scale-free graphs.\label{fig:runtimesa}]{\includegraphics{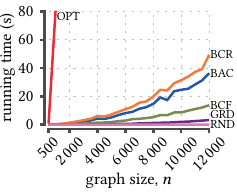}}
\hfill
\subfloat[Synthetic complete digraph.\label{fig:runtimesb}]{\includegraphics{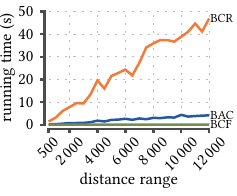}}
\caption{Running time of our algorithms as a function of the generated graph size and the size of the interval from which distances are sampled.
\label{fig:runtimes}}
\end{figure}

In this section, we compare the running time of our algorithms \algbac{}, \algbacb{}, and \algbacf{} on synthetically generated graph instances.

\ptitle{Increasing graph size}
In the first experiment, we generated directed (unweighted) scale-free networks of varying size
$n$ according to the model by~\citet{bollobas2003directed}. These synthetic
graphs are weakly connected, and we make them strongly connected by adding
directed edges in both directions. We set $k = 10$, and defined distances as the shortest path distances between pairs of
nodes (metric closure). The diameter of scale-free networks typically scales
about logarithmically with graph size~\cite{bollobas2004diameter}, so the
number of unique distances in our AMMD instances does not change that drastically when
increasing the graph size $n$. For example, when generating instances of size
$n=100$ and $n=3\,200$, the diameter roughly only doubled from 5 to 10.
Figure~\ref{fig:runtimesa} shows the results of the average running time over
several repeats. It shows that although the number of unique distances is
relatively low, the difference in speed between \algbacb{} and \algbacf{} is
still substantial. For $n=12\,000$ the running time of \algbacb{} was about
49s while \algbacf{} required less than 14s. As expected, the running time of Optimal scales exponentially while Greedy and Random are very fast.

\ptitle{Unique distances}
In a second timing experiment, we kept both $n$ and $k$ fixed and increased the number of unique distances.
We generated complete weighted digraphs of size $n=400$, where the weights are random
positive integers below a variable upper bound. To guarantee the triangle
inequality, we again took the metric closure. We set $k=10$. 
Figure~\ref{fig:runtimesb} shows the results. The theoretical speed-up of
\algbacf{} when compared to \algbac{} and \algbacb{}, as discussed in
Section~\ref{sec:speeding}, is clearly visible in our implementations on
practical data as well. Note that the running times of Greedy and Random do not depend on the number of unique distances, while Optimal takes prohibitively long for these instances. 

\section{Concluding remarks}\label{sec:conclusions}
This paper initiated the study on the asymmetric generalization of the Max-Min
Diversification problem, denoted as AMMD, which does not appear to have been
studied before. We provided an approximation algorithm with an approximation
factor of $\frac{1}{6k}$ and running time equivalent to matrix multiplication
with logarithmic factors. In practice, \algbacb{} outperformed all the baselines, returning optimal result in most of the cases,
while having pratical running times, and having theoretical guarantees.

Regarding the hardness of approximation, we leave it
as an open question whether one can improve the $\frac{1}{2}+\epsilon$
inapproximability result that follows from MMD, which is the restriction of
AMMD to symmetric distances. There is a large gap between our approximation
ratio and the inapproximability bound, and it would be interesting for future
work to narrow down the gap.
For example, by designing a constant-factor approximation algorithm for AMMD or showing that no such algorithm exists under reasonable assumptions.

\begin{acks}
This research is supported by the \grantsponsor{⟨malsome⟩}{Academy of Finland project MALSOME}{} (\grantnum[]{malsome}{343045}) and by the \grantsponsor{⟨hiit⟩}{Helsinki Institute for Information Technology (HIIT)}{}.
\end{acks}

\bibliographystyle{ACM-Reference-Format}
\balance
\bibliography{maxmin_kdd}


\clearpage 
\appendix
\section{Proofs}
\subsection{Proof of Theorem~\ref{thm:epsisymmgreedy}}
We start by proving the approximation ratio of Algorithm~\ref{algo:greedydmin}.
For a point $u \in U$ and $r>0$, let $B(u,r) = \{v \in U | \dmin(v,u) < r\}$ denote the $\dmin$-ball around $u$ with radius $r$.
We first prove the Property~\ref{prop:ballsgreedy}.
\begin{prop}
\label{prop:ballsgreedy}
If a pseudometric space $(U,d)$ is $\epsilon$-symmetric, then for any $x,y \in B(z,r)$ it holds that $\dmin(x,y) < (2+\epsilon)r$.
\end{prop}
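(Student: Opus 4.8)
The plan is to route the triangle inequality through the common center $z$ and then use $\epsilon$-symmetry to control the reverse-direction distances that ball membership does not directly bound. First I would unpack the hypothesis: since $x, y \in B(z,r)$, the definition of the $\dmin$-ball gives $\dmin(x,z) < r$ and $\dmin(y,z) < r$. The subtlety is that $\dmin$ only bounds the \emph{smaller} of the two directed distances, whereas to conclude something about $\dmin(x,y)=\min\{d(x,y),d(y,x)\}$ I will want to invoke the directed triangle inequality in both orientations, and each orientation forces me to use a possibly reverse-direction distance.

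Next I would apply $\epsilon$-symmetry separately to the pairs $\{x,z\}$ and $\{y,z\}$. Since $(1+\epsilon)\dmin(x,z) \ge \dmax(x,z)$, the sum of the two directed distances between $x$ and $z$ satisfies $d(x,z)+d(z,x) = \dmin(x,z)+\dmax(x,z) < r + (1+\epsilon)r = (2+\epsilon)r$, and by the identical argument $d(z,y)+d(y,z) < (2+\epsilon)r$. The point of recording the \emph{sum} over both directions, rather than bounding each direction individually, is that it pairs one cheap $\dmin$ contribution with one expensive $\dmax$ contribution.

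The key step is then to average the two triangle inequalities instead of using either one in isolation. From $d(x,y) \le d(x,z)+d(z,y)$ and $d(y,x) \le d(y,z)+d(z,x)$, adding the two gives $d(x,y)+d(y,x) \le \big(d(x,z)+d(z,x)\big) + \big(d(z,y)+d(y,z)\big) < 2(2+\epsilon)r$. Because $\dmin(x,y)$ is the minimum of $d(x,y)$ and $d(y,x)$, it is at most their average, so $\dmin(x,y) < (2+\epsilon)r$, which is exactly the claim.

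I do not anticipate a genuine obstacle here; the only thing to guard against is the tempting but too-weak route of bounding each direction on its own, which via $\dmax(x,z),\dmax(z,y) < (1+\epsilon)r$ would yield only $d(x,y) < (2+2\epsilon)r$. The averaging trick is precisely what recovers the sharp constant $2+\epsilon$, and it is what must be highlighted in the write-up.
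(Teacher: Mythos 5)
Your proof is correct. It uses the same decomposition as the paper---the directed triangle inequality routed through the common center $z$, with $\epsilon$-symmetry upgrading the ball bound $\dmin(\cdot,z)<r$ to $\dmax(\cdot,z)<(1+\epsilon)r$---but you combine the two orientations differently. The paper case-splits on whether $d(x,z)\le d(z,x)$ and then uses the single orientation of the triangle inequality in which the $x$--$z$ leg is the cheap $\dmin$ leg and only the $y$--$z$ leg pays the factor $(1+\epsilon)$; you instead add both orientations, note that $d(x,z)+d(z,x)=\dmin(x,z)+\dmax(x,z)<(2+\epsilon)r$ (and likewise for the pair $y,z$), and finish with the fact that a minimum of two numbers is at most their average. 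Both routes pair exactly one $\dmin$ contribution with one $\dmax$ contribution and recover the sharp constant $2+\epsilon$; your averaging step avoids the case analysis and is arguably cleaner, while the paper's version additionally identifies which of $d(x,y)$ and $d(y,x)$ realizes the bound. Your closing remark is also accurate: bounding every leg by $\dmax<(1+\epsilon)r$ alone would only yield $(2+2\epsilon)r$.
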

\begin{proof}
It holds that
\begin{align*}
\dmin(x,y) &= \min\{d(x,y),d(y,x)\} \\
&\leq \min\{d(x,z)+d(z,y),d(y,z)+d(z,x)\}.
\end{align*}
If $d(x,z) \leq d(z,x)$, then 
\begin{align*}
\dmin(x,y) \leq d(x,z)+d(z,y) < r + (1+\epsilon)r = (2+\epsilon)r.
\end{align*}
Otherwise $d(x,z) > d(z,x)$, and then
\begin{align*}
\dmin(x,y) \leq d(y,z)+d(z,x) < (1+\epsilon)r + r = (2+\epsilon)r,
\end{align*}
and Property~\ref{prop:ballsgreedy} follows.
\end{proof}
Now we continue with the proof of Theorem~\ref{thm:epsisymmgreedy}.
Consider an optimum solution $O=\{o_1,\ldots,o_k\}$ of $k$ distinct points, with optimal value $\diver{O } = R^*$.
Property~\ref{prop:ballsgreedy} implies that the $k$ balls $B_i = B(o_i,\frac{R^*}{2+\epsilon})$ for each $i=1,\ldots,k$ are pairwise disjoint. Indeed, assume two balls $B_i$ and $B_j$ share a common point $u$, then because $o_i \in B(u,\frac{R^*}{2+\epsilon})$ and $o_j \in B(u,\frac{R^*}{2+\epsilon})$, Property~\ref{prop:ballsgreedy} states that $\dmin(o_i,o_j)<R^*$, contradicting the optimality of $O$.

While $|S|<k$ in Algorithm~\ref{algo:greedydmin}, there exists some ball $B_i$ that does not contain any point from $S$. Therefore, $o_i$ is available for selection and $\dmin(o_i,S) \geq \frac{R^*}{2+\epsilon}$. As Algorithm~\ref{algo:greedydmin} picks the vertex with the maximum $\dmin$ distance towards the $S$, it holds that $\diver{S}\geq \frac{R^*}{2+\epsilon}$ in each iteration. The performance guarantee follows. 

To show that this ratio is tight, consider the following example of a pseudometric on three points $\{x,y,z\}$ with $d(x,y) = d(y,x) = 1$,
$d(x,z) = d(y,z) = \frac{1}{2+\epsilon}$ and 
$d(z,x) = d(z,y) = \frac{1+\epsilon}{2+\epsilon}$. This instance is $\epsilon$-symmetric and satisfies the directed triangle inequality. For $k=2$, the optimum is $O=\{x,y\}$ with $R^*=1$, but Algorithm~\ref{algo:greedydmin} returns a set with value $\frac{1}{2+\epsilon}$ if the first point chosen is $z$.

Algorithm~\ref{algo:greedydmin} can be implemented to run in $\bigO(kn)$ time by maintaining $d(u,S)$ for each $u \in U$ in each iteration. If $v$ is selected in an iteration, then the distances are updated as $d(u,S \cup \{v\}) = \min\{d(u,S),d(u,v)\}$. There are $k$ iterations and $n$ updates in each iteration, with each update requiring constant time.

\subsection{Proof of Theorem~\ref{thm:naivemaxanti}}
Recall that $R^*$ is the optimal value for an AMMD instance space $(U,d)$ with parameter $k$, and $O$ is a corresponding optimal set of $k$ points.
There are at most $n(n-1)$ unique pairwise distances in $d$, and $R^*$ equals one of them. Algorithm~\ref{algo:app_ma} will iterate over all unique distances $R$ and creates a digraph $G_R = (U,A)$ with $ij \in E$ if and only if $d(i,j)< \frac{R}{n-k+1}$. 

Now we argue that for any $R \leq R^*$, and for any two $x, y \in O, x \neq y$, it must hold that $y$ is unreachable from $x$ in $G_R$.
Suppose that $y$ is reachable from $x$. So there is a path $p$ in $G_R$ from $x$ to $y$.
We may assume that $p$ only has vertices that are not in $O$, except for the first vertex $x$ and the last vertex $y$. Otherwise, we continue the argument by shortening $p$ to start at $x$ until encountering the first point in $O$. Using the triangle inequality along the edges in $p$, it follows that
\[
	d(x,y) < (n-k+1) \frac{R}{n-k+1} = R,
\]
as there are at most $n-k+1$ edges in $p$, and the distance between consecutive points in $p$ is strictly smaller than $\frac{R}{n-k+1}$ by definition of $G_R$. This contradicts with $d(x,y) \geq R^*$, by optimality of $x, y \in O$.

As the points in an optimal solution are pairwise unreachable in $G_R$ for $R \leq R^*$, it follows that the maximum antichain $M$ in $G_R$ will have at least $k$ vertices, and we simply select $k$ arbitrary vertices from $M$ and return them as our final solution. As two distinct vertices $i, j \in M, i \neq j$ satisfy $d(i,j) \geq \frac{R}{n-k+1}$, the approximation guarantee follows by iterating over all unique distances so that for one of the iterations we have $R=R^*$.
The running time follows from Proposition~\ref{caceres}, by using the maximum antichain algorithm from \cite{caceres2022minimum}.

\subsection{Table of running times}

\begin{table}[H] \centering
\caption{Running times for our approximation algorithms (\textbf{\algbac{}}, \textbf{\algbacb{}}, and \textbf{\algbacf{}}), the greedy algorithm (\textbf{Greedy}), and the random baseline (\textbf{Rand.}) for fixed parameter value $k=10$ on the real-world datasets from Table~\ref{tab:datasets}.}
\tcbset{colframe=black!5, colback=black!5, size=fbox, on line}
\label{tab:running_times}
\begin{tabular}{@{}lrrrrr@{}}\toprule
Data & BAC & BCR & BCF & Greedy & Rand. \\
\midrule
\emph{ft70} & 1s & 4s & 0s & 0s & 0s \\
\emph{kro124p} & 1s & 5s & 0s & 0s & 0s \\
\emph{celegans} & 0s & 0s & 0s & 0s & 0s \\
\emph{rbg323} & 0s & 1s & 0s & 0s & 0s \\
\emph{wiki-vote} & 0s & 1s & 0s & 0s & 0s \\
\emph{US airports} & 1m 44s & 7m 25s & 0s & 0s & 0s \\
\emph{moreno} & 3s & 15s & 0s & 0s & 0s \\
\emph{openflights} & 2s & 2s & 1s & 0s & 0s \\
\emph{cora} & 9s & 41s & 1s & 0s & 0s \\
\emph{bitcoin} & 51s & 5m 13s & 2s & 0s & 0s \\
\emph{gnutella} & 11s & 23s & 2s & 0s & 0s \\
\bottomrule
\end{tabular}
\end{table}

\end{document}